\def\dref#1{(\ref{#1})}
\def\rm{\mathrm}
\newtheorem{theorem}{Theorem}
\newtheorem{lemma}{Lemma}
\newtheorem{corollary}{Corollary}
\newtheorem{remark}{Remark}
\begin{document}

\title{Distributed Adaptive Consensus Protocols for Linear Multi-agent Systems
with Directed Graphs and External Disturbances}
%
%
\author{Zhongkui~Li and Zhisheng Duan
\thanks{This work was supported by the National Natural Science Foundation
of China under grants 61104153, 11332001, 61225013, and
a Foundation for the Author of National Excellent Doctoral Dissertation of PR China.}
\thanks{Z. Li and Z. Duan are with the State Key Laboratory for Turbulence
and Complex Systems, Department of Mechanics and Engineering Science, College
of Engineering, Peking University, Beijing 100871, China
(E-mail: zhongkli@pku.edu.cn; duanzs@pku.edu.cn).}}

\maketitle

\begin{abstract}
This paper addresses the distributed consensus design problem for linear multi-agent systems with directed communication graphs
and external disturbances.
Both the cases with strongly connected communication graphs and leader-follower graphs containing a directed spanning
tree with the leader as the root are discussed.
Distributed adaptive consensus protocols based on the relative states of neighboring
agents are
designed, which can ensure
the ultimate boundedness of the consensus error and adaptive gains in the presence of external disturbances.
The upper bounds of the consensus error are further explicitly given.
Compared to the existing consensus protocols,
the merit of the adaptive protocols proposed in this paper is that
they can be computed and implemented in a fully distributed fashion
and meanwhile are robust with respect to external disturbances.
\end{abstract}

\begin{keywords}
Multi-agent system, cooperative control,
consensus, distributed control,
adaptive control, robustness.
\end{keywords}

\section{Introduction}

In the past two decades, rapid advances in miniaturizing of computing, communication, sensing, and actuation have made it feasible to deploy a large number of autonomous agents to work cooperatively to accomplish civilian and military missions. This, compared to a single complex agent, has the capability to significantly improve the operational effectiveness, reduce the costs, and provide additional degrees of redundancy \cite{ren2007information,olfati-saber2007consensus}.
Having multiple autonomous agents to work together to achieve collective behaviors is usually referred to as cooperative control of multi-agent systems. Due to its potential applications in various areas such
as
surveillance and reconnaissance systems, satellite formation flying, electric power systems,
and intelligent transportation systems, cooperative control of multi-agent systems has
received compelling attention from the systems and control
community. In the area of cooperative control, consensus is an important and fundamental problem,
which means that a group of agents reaches an agreement on certain quantity of
interest by interacting with their local neighbors.
Advances of various consensus
algorithms have been reported in a quite large body of
research papers; see \cite{ren2007information,hong2008distributed,olfati-saber2004consensus,li2010distributed,Guo2013consensus,li2013distributed,carli2009quantized,RNC:RNC2904,RNC:RNC2867}
and the references therein.

For a consensus control problem, the main task is to design appropriate protocols to achieve consensus.
Due to the large number of agents,
limited sensing capability of sensors, and short wireless communication ranges,
distributed control, depending only on local information of the agents
and their neighbors, appears to be a promising tool for handling multi-agent
systems. Note that designing appropriate distributed protocols is generally a challenging task,
especially for multi-agent systems with complex dynamics,
due to the interplay of the agent dynamics, the communication graph among agents,
and the cooperative control laws.

Take the consensus problem for multi-agent systems
with general continuous-time linear dynamics for instance.
In previous works \cite{li2010consensus,li2011dynamic,tuna2009conditions,seo2009consensus,zhang2011optimal,ma2010necessary},
several static and dynamic consensus protocols based on the local state or output information of neighboring agents
have been proposed.
A common feature in the aforementioned existing papers
is that the design of the consensus protocols needs to use
some eigenvalue information of the Laplacian matrix associated with the
communication graph.
Actually, for the simply case with second-order integrator agent dynamics,
the design of the consensus
protocols does rely on the smallest real part
of the nonzero eigenvalues of the Laplacian matrix \cite{tuna2009conditions}.
However, the smallest real part of the nonzero eigenvalues of the Laplacian matrix is global information of the communication graph,
because every agent has to know the whole communication graph to compute it.
Therefore, the consensus protocols in the aforementioned works
cannot be designed by the agents in a fully distributed way, i.e.,
relying on only the local information of neighboring agents.
To avoid this limitation,
two types of distributed adaptive consensus protocols
are proposed in \cite{li2012adaptiveauto,li2011adaptive}, which implement adaptive laws to
dynamically update the coupling weights of neighboring agents.
Similar adaptive schemes are presented in \cite{su2011adaptive,yu2013distributed} to achieve
consensus for multi-agent systems with second-order nonlinear dynamics. Note that the adaptive protocols in
\cite{li2012adaptiveauto,li2011adaptive,su2011adaptive,yu2013distributed}
are applicable to only undirected communication graphs or leader-follower
graphs where the subgraph
among followers is undirected.
Because of the asymmetry of the Laplacian matrices of directed graphs,
designing fully distributed adaptive consensus protocols for
general directed graphs is more much challenging.
By introducing monotonically increasing functions to provide
additional freedom for design,
a distributed adaptive consensus protocol is constructed in \cite{li2014TAC}
to achieve leader-follower consensus for
the communication graphs containing a directed spanning
tree with the leader as the root node. Similar adaptive protocols for directed graphs are developed
in \cite{mei2014consensus} for the special case where the agents are described by double integrators.
Even though the aforementioned advances have been reported on designing distributed adaptive protocols for general linear multi-agent systems
with directed graphs, there are still many important open problems awaiting further investigation. For instance,
to design distributed adaptive protocols for the case of general directed graphs without a leader,
to examine the robustness issue associated with the adaptive protocols, and to
propose distributed dynamic adaptive protocols for the case where only local output information is available,
to name just a few.

In this paper, we intend to address the first two aforementioned problems. Specifically, we address the distributed adaptive consensus
protocol design problem for general linear multi-agent systems
with directed graphs and external disturbances. We consider both the cases where
the communication graph among the agents is strongly connected
and contains a directed spanning tree with the leader as the root.
For the case with leader-follower graphs containing
a directed spanning tree with the leader as the root,
we revisit the distributed adaptive protocol in \cite{li2014TAC}.
It is pointed out that in the presence of external disturbances,
the adaptive gains of the adaptive protocol in \cite{li2014TAC} will slowly grow to infinity,
which is the well-known parameter drift phenomenon in the adaptive
control literature \cite{ioannou1996robust}.
To deal with this instability issue associated with the adaptive protocol in \cite{li2014TAC},
we propose a novel distributed adaptive consensus protocol,
by using $\sigma$ modification technique \cite{ioannou1996robust}.
This novel adaptive protocol
is designed in fully distributed fashion
to ensure the ultimate boundedness of both the consensus error
and the adaptive coupling gains. That is, the proposed adaptive
protocol is robust in the presence of external disturbances.
The upper bound of the consensus error is also explicitly given.
The case with strongly connected communication graphs is further studied.
A distributed robust adaptive protocol is also presented, which can
guarantee the ultimate boundedness of the consensus error
and the adaptive coupling gains in the presence of external disturbances.
A sufficient condition for the existence of
the adaptive protocols proposed in this paper is that each agent
is stabilizable.

The rest of this paper is organized as follows.
Mathematical preliminaries required in this paper is summarized in Section 2.
Distributed robust adaptive consensus protocols are presented in Sections 3 and 4 for multi-agent systems
with strongly connected graphs and directed leader-follower graphs, respectively.
Simulation examples are presented
for illustration in Section 5. Conclusions are
drawn in Section 6.

\section{Mathematical Preliminaries}
In this paper, we use the following notations and definitions:
$\mathbf{R}^{n\times m}$ represents the set of  $n\times m$
real matrices.
$I_N$ denotes the identity matrix of
dimension $N$.
$\mathbf{1}$ denotes a column vector of appropriate dimension
with its entries equal to one. 
For real symmetric matrices $W$
and $X$, $W>(\geq)X$ means that $W-X$ is positive (semi-)definite.
$A\otimes B$ denotes the Kronecker product of the matrices $A$ and $B$.
Denote by $\sigma_{\max}(B)$ the largest singular value
of a matrix $B$.
A matrix $A =[a_{ij} ]\in \mathbf{R}^n$ is
termed as a nonsingular $M$-matrix, if $a_{ij} < 0$, $ \forall i \neq j$,
and all the eigenvalues of $A$ have positive real parts.

A directed communication graph $\mathcal {G}$ is a pair $(\mathcal {V}, \mathcal
{E})$, where $\mathcal {V}=\{v_1,\cdots,v_N\}$ is a nonempty finite
set of vertices (i.e., nodes) and $\mathcal {E}\subseteq\mathcal {V}\times\mathcal
{V}$ is a set of edges, where an edge is represented by an
ordered pair of distinct vertices. For an edge $(v_i,v_j)$, $v_i$
is called the parent vertex, $v_j$ is called the child vertex, and $v_i$ is a
neighbor of $v_j$. If a directed graph having the property that
$(v_i,v_j)\in\mathcal {E}$ implies $(v_j, v_i)\in\mathcal {E}$ for
any $v_i,v_j\in\mathcal {V}$, then it is an undirected graph.
A directed path from
vertex $v_{i_1}$ to vertex $v_{i_l}$ is a sequence of ordered edges in
the form $(v_{i_k}, v_{i_{k+1}})$, $k=1,\cdots,l-1$. 
A directed graph contains a directed spanning tree if
there exists a vertex called the root, which has no parent vertex, such
that there exist directed paths from the vertex to all other vertices in the graph.
A directed graph is strongly connected if there is a directed
path between every distinct vertices.
A directed graph has a directed spanning tree if it is strongly connected,
but not vice versa.

For the directed graph $\mathcal {G}$, its
adjacency matrix $\mathcal {A}=[a_{ij}]\in\mathbf{R}^{N\times
N}$ is defined such that
$a_{ii}=0$, $a_{ij}=1$ if $(v_j,v_i)\in\mathcal {E}$ and $a_{ij}=0$
otherwise. The Laplacian matrix $\mathcal {L}=[\mathcal
{L}_{ij}]\in\mathbf{R}^{N\times N}$ associated with $\mathcal {G}$
is defined as $\mathcal
{L}_{ii}=\sum_{j\neq i}a_{ij}$ and $\mathcal {L}_{ij}=-a_{ij}$,
$i\neq j$. 

\begin{lemma} [\cite{ren2005consensus}]\label{lemma1}
Zero is an eigenvalue of $\mathcal {L}$ with $\mathbf{1}$ as a
right eigenvector and all the nonzero eigenvalues have positive real
parts. Zero is a simple eigenvalue of $\mathcal {L}$ if
and only if $\mathcal {G}$ contains a directed spanning tree.
\end{lemma}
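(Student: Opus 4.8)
The plan is as follows. The first two assertions are immediate from the definition of $\mathcal{L}$. Since $\mathcal{L}_{ii}=\sum_{j\neq i}a_{ij}$ and $\mathcal{L}_{ij}=-a_{ij}$ for $i\neq j$, every row of $\mathcal{L}$ sums to zero, so $\mathcal{L}\mathbf{1}=0$ and $0$ is an eigenvalue of $\mathcal{L}$ with $\mathbf{1}$ as a right eigenvector. For the location of the other eigenvalues I would invoke the Gershgorin disc theorem: the $i$-th disc is centered at $\mathcal{L}_{ii}=\sum_{j\neq i}a_{ij}\ge 0$ with radius $\sum_{j\neq i}|\mathcal{L}_{ij}|=\sum_{j\neq i}a_{ij}=\mathcal{L}_{ii}$, hence it is contained in the closed right half-plane and meets the imaginary axis only at the origin. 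Therefore every eigenvalue of $\mathcal{L}$ has nonnegative real part, and any eigenvalue lying on the imaginary axis must equal $0$; in particular every nonzero eigenvalue has positive real part.

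For the equivalence, first suppose $\mathcal{G}$ contains a directed spanning tree, i.e., some vertex $v_r$ has directed paths to all other vertices. I would begin by showing $\ker\mathcal{L}=\mathrm{span}\{\mathbf{1}\}$. If $\mathcal{L}x=0$, the $i$-th scalar equation reads $\sum_{j\neq i}a_{ij}(x_i-x_j)=0$. Let $S$ be the set of indices at which $x$ attains its maximum value; for $i\in S$ every summand is nonnegative, hence zero, so $x_j=x_i$ whenever $v_j$ is an in-neighbour of $v_i$. Thus $S$ contains all parents of its members, so no edge enters $S$ from outside. Since $v_r$ has a directed path into $S$, it cannot lie outside $S$; hence $v_r\in S$, and symmetrically $v_r$ also attains the minimum entry of $x$, forcing $x$ to be a multiple of $\mathbf{1}$. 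This gives geometric multiplicity one. To upgrade to algebraic multiplicity one it suffices to rule out a generalized eigenvector, i.e., to show $\mathcal{L}y=\mathbf{1}$ has no solution (since $\ker\mathcal{L}=\mathrm{span}\{\mathbf{1}\}$, a Jordan block of size at least two would, after scaling, produce such a $y$). Evaluating the $k$-th equation at an index $k$ where $y$ is smallest gives $(\mathcal{L}y)_k=\sum_{j\neq k}a_{kj}(y_k-y_j)\le 0<1$, a contradiction. Hence $0$ is simple.

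For the converse I would prove the contrapositive: if $\mathcal{G}$ has no directed spanning tree, then $0$ is not simple. Passing to the condensation of $\mathcal{G}$ into strongly connected components, the absence of a spanning tree means the condensation has at least two ``source'' components $C_1,C_2$ that receive no incoming edges, since a condensation with exactly one source component has all of its vertices reachable from that component. For each $k$, the principal submatrix of $\mathcal{L}$ indexed by $C_k$ coincides with the Laplacian $\mathcal{L}_{C_k}$ of the subgraph induced on $C_k$, because every in-neighbour of a vertex of $C_k$ already lies in $C_k$; in particular $\mathcal{L}_{C_k}$ is singular and admits a nonzero left null vector $\xi_k$. Extending $\xi_k$ by zeros to $r_k\in\mathbf{R}^N$ yields $r_k^{\mathrm{T}}\mathcal{L}=0$, since for $v_i\notin C_k$ no vertex of $C_k$ is an in-neighbour of $v_i$. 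Because $C_1$ and $C_2$ are disjoint, $r_1$ and $r_2$ are linearly independent, so $\mathcal{L}^{\mathrm{T}}$, and hence $\mathcal{L}$, has nullity at least two; thus $0$ has algebraic multiplicity at least two and is not simple.

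The Gershgorin estimate and the row-sum identity are routine. The substantive content is the equivalence with the spanning-tree condition: in the forward direction the work is the extremal-component argument, and the main obstacle is the converse, where one must exhibit two genuinely independent null directions --- this is exactly why I reduce to the source components of the condensation rather than attempting to analyze a single Jordan chain.
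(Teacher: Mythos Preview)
Your argument is correct and self-contained; note, however, that the paper does not actually prove this lemma but simply quotes it from \cite{ren2005consensus}, so there is no in-paper proof to compare against. The Gershgorin estimate, the maximum-principle argument for $\ker\mathcal{L}=\mathrm{span}\{\mathbf{1}\}$, the exclusion of a size-two Jordan block by testing $\mathcal{L}y=\mathbf{1}$ at a minimizing index, and the construction of two independent left null vectors from two source components of the condensation are all standard and sound.

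One small slip in the converse: the clause ``since for $v_i\notin C_k$ no vertex of $C_k$ is an in-neighbour of $v_i$'' asserts that source components have no \emph{outgoing} edges, which need not hold. What you actually need---and already stated one sentence earlier---is that every in-neighbour of a vertex of $C_k$ lies in $C_k$; under the paper's convention $a_{ij}=1\Leftrightarrow (v_j,v_i)\in\mathcal{E}$, this gives $\mathcal{L}_{ij}=-a_{ij}=0$ for $i\in C_k$ and $j\notin C_k$, so the \emph{rows} of $\mathcal{L}$ indexed by $C_k$ vanish outside the $C_k$ block. Consequently $(r_k^{\mathrm T}\mathcal{L})_j=\sum_{i\in C_k}(\xi_k)_i\mathcal{L}_{ij}=0$ both for $j\in C_k$ (by $\xi_k^{\mathrm T}\mathcal{L}_{C_k}=0$) and for $j\notin C_k$ (since every summand is zero). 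With that correction the proof is complete.
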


\begin{lemma}[\cite{mei2014consensus}]\label{lemma2}
Suppose that $\mathcal {G}$ is strongly connected.
Let $r=[r_1,\cdots,r_N]$ be the positive left eigenvector of $\mathcal
{L}$ associated with the zero eigenvalue
and $R={\rm{diag}}(r_1,\cdots,r_N)$. Then, $\widehat{\mathcal {L}}\triangleq R\mathcal {L}+\mathcal {L}^TR$
is the symmetric Laplacian matrix associated with an
undirected connected graph.
Let $\xi$ by any vector with positive entries.
Then,
$\min_{\xi^Tx=0,x\neq0}\frac{x^T\widehat{\mathcal {L}}x}{x^Tx}>\frac{\lambda_2(\widehat{\mathcal {L}})}{N},$
where $\lambda_2(\widehat{\mathcal {L}})$ denotes the smallest nonzero eigenvalue of $\widehat{\mathcal {L}}$.
\end{lemma}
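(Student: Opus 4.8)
The plan is to dispatch the two assertions of the lemma in turn: first that $\widehat{\mathcal L}$ is the symmetric Laplacian of an undirected connected graph, and then the Rayleigh-quotient estimate. For the first, I would note that strong connectivity of $\mathcal G$ makes $\mathcal L$ irreducible, so the Perron--Frobenius theorem supplies a left null vector of $\mathcal L$ with strictly positive entries, namely $r$, satisfying $\mathcal L^Tr=0$. Symmetry of $\widehat{\mathcal L}=R\mathcal L+\mathcal L^TR$ is immediate; its off-diagonal entries are $[\widehat{\mathcal L}]_{ij}=r_i\mathcal L_{ij}+r_j\mathcal L_{ji}=-(r_ia_{ij}+r_ja_{ji})\le 0$; and its row sums vanish since $\widehat{\mathcal L}\mathbf 1=R(\mathcal L\mathbf 1)+\mathcal L^T(R\mathbf 1)=0+\mathcal L^Tr=0$. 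Thus $\widehat{\mathcal L}$ is the Laplacian of the undirected weighted graph whose edge set is $\{\{i,j\}:a_{ij}+a_{ji}>0\}$, and this graph is connected precisely because $\mathcal G$ is strongly connected; Lemma~\ref{lemma1} applied to it then gives that $0$ is a simple eigenvalue of $\widehat{\mathcal L}$, so $\lambda_2(\widehat{\mathcal L})>0$ and $\mathbf 1$ spans the kernel.

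For the inequality I would take an arbitrary $x\neq 0$ with $\xi^Tx=0$ and write $x=c\mathbf 1+y$ with $c=\frac{\mathbf 1^Tx}{N}$ and $\mathbf 1^Ty=0$. Since $\widehat{\mathcal L}\mathbf 1=0$, we have $x^T\widehat{\mathcal L}x=y^T\widehat{\mathcal L}y\ge\lambda_2(\widehat{\mathcal L})\|y\|^2$ by the variational characterization of the second-smallest eigenvalue. The constraint becomes $c\,(\mathbf 1^T\xi)=-\xi^Ty=-(\xi-\bar\xi\mathbf 1)^Ty$ with $\bar\xi=\frac{\mathbf 1^T\xi}{N}$ (using $\mathbf 1^Ty=0$), whence by Cauchy--Schwarz $c^2N\le\frac{N\|\xi-\bar\xi\mathbf 1\|^2}{(\mathbf 1^T\xi)^2}\|y\|^2$. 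Using $\|\xi-\bar\xi\mathbf 1\|^2=\|\xi\|^2-\frac{(\mathbf 1^T\xi)^2}{N}$ this reduces to $\|x\|^2=c^2N+\|y\|^2\le\frac{N\|\xi\|^2}{(\mathbf 1^T\xi)^2}\|y\|^2$, and since the entries of $\xi$ are strictly positive, $(\mathbf 1^T\xi)^2=\|\xi\|^2+\sum_{i\neq j}\xi_i\xi_j>\|\xi\|^2$, so the factor $\frac{N\|\xi\|^2}{(\mathbf 1^T\xi)^2}$ is strictly less than $N$. Chaining the estimates gives $x^T\widehat{\mathcal L}x\ge\frac{(\mathbf 1^T\xi)^2}{N\|\xi\|^2}\,\lambda_2(\widehat{\mathcal L})\,\|x\|^2>\frac{\lambda_2(\widehat{\mathcal L})}{N}\|x\|^2$, and because the leading constant is independent of $x$ the strict inequality survives the minimization over admissible $x$.

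The step I expect to be the crux is the middle one: converting orthogonality to $\xi$ (a vector different from the kernel vector $\mathbf 1$) into a quantitative bound on how large the $\mathbf 1$-component of $x$ can be. This is where the factor $N$ enters, and it is exactly the strict positivity of the entries of $\xi$ --- through the elementary inequality $\|\xi\|^2<(\mathbf 1^T\xi)^2$ --- that keeps the loss strictly below $N$; the inequality would be false for a sign-indefinite $\xi$. One should also observe that any admissible $x\neq 0$ necessarily has $y\neq 0$ (otherwise $c\,(\mathbf 1^T\xi)=0$ forces $x=0$), though since the final bound is a fixed multiple of $\|x\|^2$ this is not strictly needed for the conclusion.
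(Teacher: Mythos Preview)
The paper does not supply its own proof of this lemma: it is quoted verbatim as a cited result from \cite{mei2014consensus}, so there is nothing in the present paper to compare your argument against.

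That said, your proof is correct and self-contained. The first part is routine; the second part is the substantive one, and your decomposition $x=c\mathbf 1+y$ with $\mathbf 1^Ty=0$, followed by Cauchy--Schwarz on $(\xi-\bar\xi\mathbf 1)^Ty$, cleanly yields the uniform lower bound
\[
\frac{x^T\widehat{\mathcal L}x}{x^Tx}\;\ge\;\frac{(\mathbf 1^T\xi)^2}{N\|\xi\|^2}\,\lambda_2(\widehat{\mathcal L}),
\]
which is independent of $x$ and strictly exceeds $\lambda_2(\widehat{\mathcal L})/N$ because $(\mathbf 1^T\xi)^2>\|\xi\|^2$ for a vector with positive entries (and $N\ge 2$). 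Your closing remark that $y\neq 0$ for admissible $x$ is correct and worth keeping, since it confirms the inequality $\|x\|^2\le\frac{N\|\xi\|^2}{(\mathbf 1^T\xi)^2}\|y\|^2$ is never vacuous. The only cosmetic point: you might state explicitly that the infimum is attained (the constraint set intersected with the unit sphere is compact), so that writing ``$\min$'' rather than ``$\inf$'' is justified; your uniform bound already ensures the strict inequality passes to the minimum.
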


\begin{lemma}[\cite{li2014TAC}]\label{lemma3}
For a nonsingular $M$-matrix $M$, there exists a positive diagonal matrix $G$ such that $GM+M^TG>0$.
Moreover, $G$ can be given by ${\mathrm{diag}}(q_1,\cdots,q_{N})$, where $q =[q_1,\cdots,q_{N}]^T=(M^T)^{-1}{\bf 1}$.
\end{lemma}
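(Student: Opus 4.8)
The plan is to proceed in two stages: first confirm that the prescribed $G$ is a genuine positive definite diagonal matrix, and then show $GM+M^TG>0$ by exhibiting it as a symmetric, strictly diagonally dominant matrix.

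I would begin by recording the standard $M$-matrix facts about $M$. Transposition alters neither the sign pattern of the off-diagonal entries nor the spectrum, so $M^T$ is again a nonsingular $M$-matrix; hence $(M^T)^{-1}$ exists and is entrywise nonnegative. An invertible nonnegative matrix can have no zero row, so every component of $q=(M^T)^{-1}\mathbf{1}$ is strictly positive, which makes $G={\rm{diag}}(q_1,\dots,q_N)$ positive definite; moreover $M^Tq=\mathbf{1}$ by construction.

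The heart of the argument is the analysis of $Q\triangleq GM+M^TG$, which is symmetric. For $i\neq j$ its $(i,j)$ entry is $q_iM_{ij}+q_jM_{ji}\leq 0$, since $q_i,q_j>0$ and the off-diagonal entries of the $M$-matrix $M$ are nonpositive. The decisive computation is the row-sum vector,
\[
Q\mathbf{1}=G(M\mathbf{1})+M^T(G\mathbf{1})=G(M\mathbf{1})+M^Tq=G(M\mathbf{1})+\mathbf{1}.
\]
For the $M$-matrices occurring in this paper --- those attached to a leader--follower digraph, of the form ``subgraph Laplacian plus nonnegative diagonal'' --- one has $M\mathbf{1}\geq 0$, so $G(M\mathbf{1})\geq 0$ and therefore $Q\mathbf{1}\geq\mathbf{1}>0$ componentwise. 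Together with the sign of the off-diagonal entries this gives $Q_{ii}=(Q\mathbf{1})_i+\sum_{j\neq i}|Q_{ij}|>\sum_{j\neq i}|Q_{ij}|$ for every $i$; that is, $Q$ is symmetric, strictly diagonally dominant, and has positive diagonal entries. By the Gershgorin disc theorem every (necessarily real) eigenvalue of $Q$ is positive, so $GM+M^TG=Q>0$. Equivalently, $Q$ is then a symmetric $Z$-matrix admitting the strictly positive vector $\mathbf{1}$ with $Q\mathbf{1}>0$, hence a symmetric positive definite $M$-matrix.

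I expect the one genuine obstacle to be the step producing $Q\mathbf{1}>0$. For an arbitrary nonsingular $M$-matrix the vector $M\mathbf{1}$ may have negative components, and in that generality $GM+M^TG$ with this particular $G$ need not be positive definite; so the conclusion really rests on the additional structure $M\mathbf{1}\geq 0$ enjoyed by the leader--follower $M$-matrices. Everything else --- nonnegativity of $(M^T)^{-1}$, positivity of $q$, the off-diagonal sign count, and the diagonal-dominance endgame --- is routine bookkeeping.
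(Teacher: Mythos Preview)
The paper does not prove this lemma; it is quoted from \cite{li2014TAC} without argument, so there is no in-paper proof to compare against. Your diagonal-dominance argument is correct and complete for the matrices that actually occur here --- the block $\mathcal{L}_1$ of a leader--follower Laplacian, for which $\mathcal{L}_1\mathbf{1}=-\mathcal{L}_2\geq 0$ entrywise --- and that is all the paper ever uses.

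Your closing caveat is not just an obstacle but decisive: the specific choice $G=\mathrm{diag}\bigl((M^T)^{-1}\mathbf{1}\bigr)$ does \emph{not} work for an arbitrary nonsingular $M$-matrix. A $2\times 2$ instance suffices: take $M=\left(\begin{smallmatrix}1 & -10\\ -\epsilon & 1\end{smallmatrix}\right)$ with small $\epsilon>0$. Both eigenvalues $1\pm\sqrt{10\epsilon}$ are positive, yet one computes $q\to(1,11)^T$ and $GM+M^TG\to\left(\begin{smallmatrix}2 & -10\\ -10 & 22\end{smallmatrix}\right)$ as $\epsilon\to 0$, which has determinant $-56$. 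Thus the ``Moreover'' clause, read for general nonsingular $M$-matrices, is too strong; it holds under the extra structural hypothesis $M\mathbf{1}\geq 0$ that you isolated, which is exactly the hypothesis satisfied by $\mathcal{L}_1$ in this paper and, presumably, in the source \cite{li2014TAC}. Your proof is the right one for the intended application, and your identification of the hidden assumption is sharper than the lemma's own statement.
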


\begin{lemma}[Young's Inequality, \cite{bernstein2009matrix}]\label{lemma4}
If $a$ and $b$ are nonnegative real numbers and $p$ and $q$ are positive real numbers such that $1/p + 1/q = 1$, then
$ab\leq\frac{a^p}{p}+\frac{b^q}{q}$.
\end{lemma}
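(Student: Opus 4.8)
The plan is to reduce the inequality to the concavity of the natural logarithm (equivalently, to the convexity of the exponential). First I would dispose of the degenerate cases: if either $a=0$ or $b=0$, then the left-hand side $ab$ is zero while the right-hand side $\frac{a^p}{p}+\frac{b^q}{q}$ is nonnegative, so the claim holds trivially. Hence it remains to treat $a>0$ and $b>0$, and I also note that the hypothesis $1/p+1/q=1$ together with $p,q>0$ forces $p>1$ and $q>1$, so that $1/p$ and $1/q$ are genuine convex weights summing to one.

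For the main case, since $\ln(\cdot)$ is strictly concave on $(0,\infty)$, Jensen's inequality applied to the two points $a^p$ and $b^q$ with weights $1/p$ and $1/q$ gives
\[
\ln\!\left(\tfrac{1}{p}a^p+\tfrac{1}{q}b^q\right)\;\geq\;\tfrac{1}{p}\ln(a^p)+\tfrac{1}{q}\ln(b^q)\;=\;\ln a+\ln b\;=\;\ln(ab).
\]
Because $\ln(\cdot)$ is monotonically increasing, this is equivalent to $\frac{a^p}{p}+\frac{b^q}{q}\geq ab$, which is exactly the assertion of the lemma.

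A self-contained alternative avoids invoking Jensen: fix $b>0$ and set $g(a)=\frac{a^p}{p}+\frac{b^q}{q}-ab$ for $a\geq 0$. Then $g'(a)=a^{p-1}-b$ vanishes only at $a^{\star}=b^{1/(p-1)}$, and $g''(a)=(p-1)a^{p-2}\geq 0$ since $p>1$, so $a^{\star}$ is the global minimiser of $g$ on $[0,\infty)$. Using $1/p+1/q=1$ one checks $q=p/(p-1)$, hence $(a^{\star})^p=b^q=a^{\star}b$, and therefore $g(a^{\star})=b^q\bigl(\tfrac{1}{p}+\tfrac{1}{q}-1\bigr)=0$; consequently $g(a)\geq 0$ for every $a\geq 0$, giving the claim again. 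I do not expect a genuine obstacle here — the only points requiring a little care are the boundary cases $a=0$ or $b=0$ and the routine bookkeeping with the conjugate exponents $p$ and $q$.
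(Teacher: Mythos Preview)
Your proof is correct. The paper does not actually prove this lemma; it is stated with a citation to \cite{bernstein2009matrix} as a standard result and used as a tool in the proof of Theorem~\ref{s2thm1}. Both of your arguments --- the Jensen/concavity-of-$\ln$ route and the one-variable calculus minimisation --- are standard and complete, with the degenerate cases $a=0$ or $b=0$ properly handled and the conjugate-exponent bookkeeping done correctly. There is nothing to compare against here beyond noting that you have supplied a proof where the paper simply quotes the result.
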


\begin{lemma}[\cite{corless1981continuous}]\label{lemma5}
For a dynamical system $\dot{x}=f(x,t)$,
where $f(\cdot,\cdot)$ is
locally Lipschitz in $x$ and piecewise continuous in $t$,
suppose that there exists
a continuous and differentiable function $V(x,t)$ such that
along any trajectory of the system,
$$\begin{aligned}
\alpha_1(\|x\|)\leq V(x,t)\leq \alpha_2(\|x\|),\\
\dot{V}(x,t)\leq-\alpha_3(\|x\|)+\epsilon,
\end{aligned}$$
where $\epsilon$ is a positive constant, $\alpha_1$ and $\alpha_2$ are class $\mathcal {K}_\infty$ functions,
and $\alpha_3$ is a class $\mathcal {K}$ function. 
Then, the solution
$x(t)$ of $\dot{x}=f(x,t)$ is uniformly ultimately bounded.
\end{lemma}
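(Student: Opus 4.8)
The plan is to run a standard Lyapunov sublevel-set argument that exploits the two displayed inequalities. First I would use the fact that $\alpha_3$ is class $\mathcal{K}$ (hence continuous, strictly increasing, with $\alpha_3(0)=0$) to introduce the threshold radius $\mu:=\alpha_3^{-1}(\epsilon)$; here I assume, as is implicit in the statement, that $\epsilon$ lies in the range of $\alpha_3$, since otherwise $\dot V$ need not be sign-definite anywhere and no ultimate bound can be claimed. For any $x$ with $\|x\|\ge\mu$ one has $\alpha_3(\|x\|)\ge\epsilon$, so $\dot V(x,t)\le 0$; more importantly, for any fixed $\rho>\alpha_2(\mu)$ one gets on the set $\{x:\|x\|\ge\alpha_2^{-1}(\rho)\}$ the \emph{uniform} estimate $\dot V(x,t)\le-\bigl(\alpha_3(\alpha_2^{-1}(\rho))-\epsilon\bigr)=:-\gamma_\rho<0$, because $\alpha_2^{-1}(\rho)>\mu$. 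Invertibility of $\alpha_1,\alpha_2$ with class-$\mathcal{K}_\infty$ inverses is used here.

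Second, I would translate this into forward invariance of sublevel sets of $V$. Fix $\rho>\alpha_2(\mu)$. Since $V(x,t)\le\alpha_2(\|x\|)$, the relation $V(x,t)=\rho$ forces $\|x\|\ge\alpha_2^{-1}(\rho)>\mu$, hence $\dot V<0$ whenever $V=\rho$. A first-exit-time contradiction then applies: if $V(x(t_0),t_0)\le\rho$ but $V$ were to exceed $\rho$ later, let $t_1$ be the last instant up to which $V\le\rho$; by continuity $V(x(t_1),t_1)=\rho$, so $\dot V(x(t_1),t_1)<0$, contradicting the fact that $V$ increases past $\rho$ immediately after $t_1$. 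Hence $V(x(t),t)\le\max\{V(x(t_0),t_0),\rho\}$ for all $t\ge t_0$, and $\{(x,t):V(x,t)\le\rho\}$ is forward invariant. Combined with $\alpha_1(\|x\|)\le V(x,t)$ this already yields uniform boundedness, with a bound depending only on $\|x(t_0)\|$.

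Third, for the ultimate bound I would establish finite-time entry into the target set $\{V\le\rho\}$. As long as $V(x(t),t)>\rho$ we have $\|x(t)\|\ge\alpha_2^{-1}(\rho)$, so $\dot V\le-\gamma_\rho$; integrating, $V$ decreases at least linearly and must reach the value $\rho$ no later than $t_0+\bigl(V(x(t_0),t_0)-\rho\bigr)/\gamma_\rho$, a time depending on the initial data only through $V(x(t_0),t_0)\le\alpha_2(\|x(t_0)\|)$. Thereafter $V(x(t),t)\le\rho$ forever, so $\|x(t)\|\le\alpha_1^{-1}(\rho)=:b$. Since $b$ is independent of the initial condition and the entry time is uniform over bounded sets of initial conditions, $x(t)$ is uniformly ultimately bounded; letting $\rho\downarrow\alpha_2(\mu)$ shows the ultimate bound can be pushed arbitrarily close to $\alpha_1^{-1}(\alpha_2(\alpha_3^{-1}(\epsilon)))$, at the cost of a larger entry time.

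The main obstacle — essentially the only non-routine point — is that the hypothesis only guarantees $\dot V<0$, not $\dot V$ bounded away from zero, just outside the ball of radius $\mu$, so one cannot conclude finite-time convergence to $\{\|x\|\le\mu\}$ directly; the remedy is exactly the enlarged radius $\alpha_2^{-1}(\rho)$, which purchases the uniform decrease rate $\gamma_\rho$ needed in the third step. The only other care required is bookkeeping around the explicit $t$-dependence of $V$, which is dealt with by the first-exit-time argument rather than by differentiating a time-varying sublevel set.
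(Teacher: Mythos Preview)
The paper does not prove Lemma~\ref{lemma5}; it is quoted as a known result from \cite{corless1981continuous} and used as a black box in the proofs of Theorems~\ref{s2thm1} and~\ref{thm2}. Your argument is the standard Lyapunov sublevel-set proof of uniform ultimate boundedness (as in, e.g., Khalil or the original Corless--Leitmann paper), and it is correct as written, including the observation that one must enlarge the threshold from $\mu=\alpha_3^{-1}(\epsilon)$ to $\alpha_2^{-1}(\rho)$ with $\rho>\alpha_2(\mu)$ in order to obtain a strictly negative uniform decay rate for $V$. The caveat you flag about $\epsilon$ lying in the range of $\alpha_3$ is also apposite: the lemma as stated in the paper tacitly requires this (or equivalently that $\alpha_3$ be class $\mathcal{K}_\infty$, or that $\epsilon$ be sufficiently small), since otherwise the conclusion can fail.
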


\section{Distributed Robust Adaptive Protocols for Leader-follower Graphs in The Presence of External Disturbances}

\subsection{Problem Formulation and Motivation}

In this paper, we consider a group of $N$ identical agents with general
linear dynamics.
The dynamics of the $i$-th agent are described by
\begin{equation}\label{1c}
\begin{aligned}
    \dot{x}_i &=Ax_i+B[u_i+\omega_i],
\quad i=1,\cdots,N,
\end{aligned}
\end{equation}
where $x_i\in\mathbf{R}^n$ is the state vector,
$u_i\in\mathbf{R}^{p}$ is the control input vector,
$A$ and $B$ are known constant matrices with
compatible dimensions, and
$\omega_i\in\mathbf{R}^n$
denotes external disturbances associated with the $i$-th agent,
which satisfies the matching condition and the following assumption.

{\it Assumption 1:} There exist
positive constants $\upsilon_i$ such that
$\|\omega_i\|\leq\upsilon_i$, $i=1,\cdots,N$.

In this section, we consider the case where there are
$N-1$ followers and one leader.
Without loss of generality, let the agent in \dref{1c} indexed by 1
be the leader whose control input is assumed to be zero, i.e., $u_1=0$,
and the agents indexed by $2,\cdots,N$, be the
followers.
The communication graph $\mathcal {G}$ among the $N$ agents is assumed to satisfy
the following assumption.

{\it Assumption 2:}
The graph $\mathcal {G}$ contains a directed spanning tree with the leader as the root node.

Since the vertex indexed by 1
is the leader and it does not have any neighbor, the Laplacian matrix $\mathcal {L}$ associated with
$\mathcal {G}$ can be written into
\begin{equation}\label{lapc}
\mathcal {L}=\begin{bmatrix} 0 & 0_{1\times (N-1)} \\
\mathcal {L}_2 & \mathcal {L}_1\end{bmatrix},
\end{equation}
where $\mathcal {L}_2\in\mathbf{R}^{(N-1)\times 1}$ and $\mathcal
{L}_1\in\mathbf{R}^{(N-1)\times (N-1)}$. In light of Lemma \ref{lemma1},
it is easy to see that $\mathcal {L}_1$ is a nonsingular $M$-matrix
if $\mathcal {G}$ satisfies Assumption 2.

This intention of this section is to solve the consensus problem for the agent in \dref{1c},
i.e., to design
distributed consensus protocols under which the states of the $N-1$ followers
converge to the state of the leader in the sense of
$\lim_{t\rightarrow \infty}\|x_i(t)- x_1(t)\|=0$,
$\forall\,i=2,\cdots,N.$

For the case without external disturbances, i.e.,
$\omega_i=0$, $i=1,\cdots,N$,
a distributed adaptive consensus protocol
based on the relative states of neighboring agents was proposed
for each follower as \cite{li2014TAC}:
\begin{equation}\label{ssda0}
\begin{aligned}
u_i &=\bar{c}_i\rho_i(\xi_i^TQ\xi_i)K\xi_i,\\
\dot{\bar{c}}_i &=\xi_i^T\Gamma\xi_i,\quad i=2,\cdots,N,
\end{aligned}
\end{equation}
where $\xi_i\triangleq\sum_{j=1}^Na_{ij}(x_i-x_j)$, $, i=2,\cdots,N$,
$\bar{c}_i(t)$ denotes the time-varying coupling gain (weight) associated with the $i$-th follower
with $\bar{c}_i(0)\geq1$,
$a_{ij}$ is the $(i,j)$-th entry of the adjacency matrix $\mathcal {A}$ associated with
$\mathcal {G}$,
$K\in\mathbf{R}^{p\times n}$ and $\Gamma\in\mathbf{R}^{n\times n}$
are the feedback gain matrices,
$\rho_i(\cdot)$ are smooth
and monotonically increasing functions
which satisfies that
$\rho_i(s)\geq1$ for $s>0$,
and $Q>0$ is a
solution to the following algebraic Riccati equation (ARE):
\begin{equation}\label{alg1}
A^TQ+QA+I-QBB^TQ=0.
\end{equation}

\begin{lemma}[\cite{li2014TAC}]\label{lems2}
Suppose that the communication graph $\mathcal
{G}$ satisfies Assumption 2.
Then, the consensus problem of the agents
in \dref{1c} is solved by the adaptive protocol \dref{ssda0}
with $K=-B^TQ$, $\Gamma=QBB^TQ$,
and $\rho_i(\xi_i^TQ\xi_i)=(1+\xi_i^TQ\xi_i)^3$.
Moreover, each coupling gain
$\bar{c}_{i}$ converges to some finite steady-state value.
\end{lemma}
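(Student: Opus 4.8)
The plan is to run a Lyapunov argument in leader--follower error coordinates. First I would set $e_i\triangleq x_i-x_1$ for $i=2,\dots,N$, stack $e=[e_2^T,\dots,e_N^T]^T$, and, using the block form \dref{lapc} together with $\omega_i\equiv0$ and $u_1=0$, verify that $\xi\triangleq[\xi_2^T,\dots,\xi_N^T]^T=(\mathcal{L}_1\otimes I_n)e$ and that the error obeys $\dot e=(I_{N-1}\otimes A)e+(I_{N-1}\otimes B)u$ with $u_i=\bar c_i\rho_i(\xi_i^TQ\xi_i)K\xi_i$; hence $\dot\xi=(\mathcal{L}_1\otimes A)e+(\mathcal{L}_1\otimes B)u$. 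By Assumption 2 and Lemma \ref{lemma1}, $\mathcal{L}_1$ is a nonsingular $M$-matrix, so Lemma \ref{lemma3} supplies a positive diagonal $G=\mathrm{diag}(q_2,\dots,q_N)$, $q=(\mathcal{L}_1^T)^{-1}\mathbf{1}$, with $G\mathcal{L}_1+\mathcal{L}_1^TG>0$; write $\lambda_0>0$ for its smallest eigenvalue. Since $\mathcal{L}_1$ is invertible, $\xi\to0$ is equivalent to $e\to0$, i.e.\ to consensus, so it suffices to show $\xi\to0$ and that the $\bar c_i$ stay bounded.

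Next I would take, with $g_i\triangleq\xi_i^TQ\xi_i$ and a constant $\beta>0$ to be fixed later, a Lyapunov function of the form
$$V=\sum_{i=2}^N q_i\Big[\,2\!\int_0^{g_i}\!\rho_i(s)\,ds+(\bar c_i-\beta)^2\,\Big]$$
(or a close variant), which, because $\rho_i(s)\ge1$, is positive definite and radially unbounded in $(\xi,\,\bar c_2-\beta,\dots,\bar c_N-\beta)$. Differentiating along the closed loop, $\frac{d}{dt}\!\int_0^{g_i}\!\rho_i(s)\,ds=2\rho_i(g_i)\,\xi_i^TQ\dot\xi_i$; substituting $\dot\xi$, eliminating $e$ via $e=(\mathcal{L}_1^{-1}\otimes I_n)\xi$, using the ARE \dref{alg1} to replace $QA+A^TQ$ by $QBB^TQ-I_n$, and inserting $K=-B^TQ$ and $\dot{\bar c}_i=\xi_i^T\Gamma\xi_i=\xi_i^TQBB^TQ\xi_i$, I expect $\dot V$ to collapse to the schematic form
$$\dot V\le-2\sum_{i=2}^N q_i\rho_i(g_i)\,\xi_i^T\xi_i\;-\;2\beta\,\xi^T\!\big[(G\mathcal{L}_1+\mathcal{L}_1^TG)\otimes QBB^TQ\big]\xi\;+\;(\text{indefinite cross terms}),$$
where the adaptive law has been used precisely to cancel the $\beta$-dependent part of the coupling term against $\sum q_i(\bar c_i-\beta)\dot{\bar c}_i$. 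The main obstacle --- and the reason the monotone functions $\rho_i$ are built into the protocol --- is the remaining cross terms: the asymmetry of $\mathcal{L}_1$ together with the fully decentralized gains $\bar c_i\rho_i(g_i)$, which differ across agents, produces a term of the type $\xi^T\!\big(G\mathcal{L}_1\,\mathrm{diag}(\bar c_i\rho_i(g_i))\otimes QBB^TQ\big)\xi$ that cannot be symmetrized, and it must be dominated; the point is that the integral terms in $V$ generate the negative quantity $-2\sum q_i\rho_i(g_i)\,\xi_i^T\xi_i$, which grows together with the offending gains, so that after a Young's-inequality split (Lemma \ref{lemma4}) and using $\rho_i\ge1$ and $\lambda_0>0$, choosing $\beta$ large enough yields $\dot V\le-\kappa\|\xi\|^2$ for some $\kappa>0$.

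Finally, $\dot V\le0$ makes $V$, hence $\xi$ and every $\bar c_i$, bounded; integrating $\dot V\le-\kappa\|\xi\|^2$ shows $\xi\in\mathcal{L}_2$. Boundedness of $e$, $\xi$, $\bar c_i$ and continuity of $\rho_i$ render $\dot\xi$ bounded, so $\xi\to0$ by Barbalat's lemma, and therefore $e=(\mathcal{L}_1^{-1}\otimes I_n)\xi\to0$, i.e.\ $\lim_{t\to\infty}\|x_i(t)-x_1(t)\|=0$ for all $i=2,\dots,N$. For the last claim, $\dot{\bar c}_i=\xi_i^TQBB^TQ\xi_i\ge0$ shows each $\bar c_i$ is nondecreasing, while boundedness of $V$ shows it is bounded above; a nondecreasing bounded function converges, so each $\bar c_i$ tends to a finite steady-state value.
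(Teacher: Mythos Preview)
The paper does not prove Lemma \ref{lems2} directly (it is cited from \cite{li2014TAC}), but the structure of the required proof is visible in the disturbance--free specialization of the proof of Theorem \ref{s2thm1}. Comparing your plan with that argument reveals a genuine gap in the choice of Lyapunov function.

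Your candidate $V=\sum_i q_i\big[2\!\int_0^{g_i}\rho_i(s)\,ds+(\bar c_i-\beta)^2\big]$ lacks the factor $\bar c_i$ in front of the integral. Differentiating the integral part then produces $\xi^T(\hat\rho G\otimes Q)\dot\xi$, and the coupling contribution becomes $-\xi^T\big(\hat\rho G\mathcal{L}_1\widehat C\hat\rho\otimes QBB^TQ\big)\xi$, which after symmetrization reads $-\tfrac12\xi^T\big(\hat\rho(G\mathcal{L}_1\widehat C+\widehat C\mathcal{L}_1^TG)\hat\rho\otimes QBB^TQ\big)\xi$. The matrix $G\mathcal{L}_1\widehat C+\widehat C\mathcal{L}_1^TG$ is \emph{not} sign--definite for general diagonal $\widehat C$, so this term may well be positive and of order $\max_j \bar c_j$. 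Your proposed absorbing term $-2\sum_i q_i\rho_i(g_i)\,\xi_i^T\xi_i$ carries only the $\rho_i$'s, not the $\bar c_j$'s; it does \emph{not} ``grow together with the offending gains,'' and no choice of $\beta$ can rescue this because the adaptive part $2\sum_i q_i(\bar c_i-\beta)\xi_i^TQBB^TQ\xi_i$ adds yet another sign--indefinite, $\bar c_i$--dependent term rather than cancelling the asymmetric one. Young's inequality will not close the estimate here.

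The remedy in the paper (and in \cite{li2014TAC}) is to take
\[
V_1=\sum_{i}\frac{\bar c_i q_i}{2}\int_0^{g_i}\rho_i(s)\,ds+\text{const}\cdot\sum_i(\bar c_i-\alpha)^2,
\]
i.e.\ to put $\bar c_i$ multiplicatively on the integral. Differentiating now yields $\xi^T(\widehat C\hat\rho G\otimes Q)\dot\xi$, so the coupling term becomes $-\tfrac12\xi^T\big(\widehat C\hat\rho(G\mathcal{L}_1+\mathcal{L}_1^TG)\widehat C\hat\rho\otimes QBB^TQ\big)\xi\le-\tfrac{\hat\lambda_0}{2}\xi^T(\widehat C^2\hat\rho^2\otimes QBB^TQ)\xi$, which is exactly the quadratic--in--$\bar c_i\rho_i$ negative term needed to dominate everything else. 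The price is an additional term $\tfrac12\sum_i\dot{\bar c}_i q_i\!\int_0^{g_i}\rho_i(s)\,ds$ in $\dot V_1$; this is where the specific choice $\rho_i(s)=(1+s)^3$ and Lemma \ref{lemma4} enter (see \dref{lyas4}): one bounds $\int_0^{g_i}\rho_i\le\rho_i(g_i)g_i$ and uses Young's inequality so that this extra term is controlled by a combination of $\rho_i^2\xi_i^TQBB^TQ\xi_i$ and a constant multiple of $\xi_i^TQBB^TQ\xi_i$, both of which are absorbed by $-\tfrac{\hat\lambda_0}{2}\bar c_i^2\rho_i^2\xi_i^TQBB^TQ\xi_i$ and by the $(\bar c_i-\alpha)$--part for $\alpha$ large enough. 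Once $\dot V_1\le-\tfrac12\xi^T(\widehat C\hat\rho G\otimes I)\xi\le0$ is obtained, your concluding Barbalat/monotone--convergence argument for $\xi\to0$ and $\bar c_i\to\bar c_i(\infty)$ goes through unchanged.
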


Lemma \ref{lems2} shows that
the adaptive protocol \dref{ssda0} can achieve consensus for the
case where the agents in \dref{1c} are not perturbed by external disturbances.
The adaptive protocol \dref{ssda0} constructed by Lemma \ref{lems2},
contrary to the consensus protocols in \cite{li2010consensus,li2011dynamic,seo2009consensus,tuna2008lqr,zhang2011optimal},
depends on only the agent dynamics and the relative states of neighboring
agents, and thereby can be computed and implemented by each agent
in a fully distributed way.

Note that in many circumstances, the agents may be subject to certain external disturbances,
for which case it is necessary and interesting to investigate whether the adaptive protocol
\dref{ssda0} designed by Lemma \ref{lems2} is still applicable, i.e.,
whether \dref{ssda0} is robust with respect
to external disturbances. Due to the existence of
nonzero $\omega_i$ in \dref{1c},
the relative states will not converge to zero any more
but rather can only be expected to converge
into some small neighborhood of the origin.
Since the derivatives of the adaptive gains $\bar{c}_i$
are of nonnegative quadratic forms
in terms of the relative states, in this case it is easy to see from \dref{ssda0} that
$\bar{c}_{i}$ will keep growing to infinity,
which is known as the parameter drift phenomenon in the classic adaptive
control literature \cite{ioannou1996robust}.
Therefore, the adaptive protocol
\dref{ssda0} is not robust
in the presence of external disturbances.

\subsection{Distributed Robust Adaptive Consensus Protocols}

The objective of this subsection is to made
modifications on \dref{ssda0}
in order to present novel distributed
robust adaptive protocols which can
guarantee the stability or ultimate boundedness
of the consensus error in the presence of
external disturbances.
By utilizing the $\sigma$ modification technique \cite{ioannou1996robust},
we propose a new distributed adaptive consensus protocol as follows:
\begin{equation}\label{ssda}
\begin{aligned}
u_i &=c_i\rho_i(\xi_i^TQ\xi_i)K\xi_i,\\
\dot{c}_i &=-\varphi_i(c_i-1)+\xi_i^T\Gamma\xi_i,\quad i=2,\cdots,N,
\end{aligned}
\end{equation}
where
$\varphi_i$, $i=2,\cdots,N$, are small positive constants
and the rest of the variables are defined as in \dref{ssda0}.

Let $\xi=[\xi_2^T,\cdots,\xi_N^T]^T$, where $\xi_i=\sum_{j=1}^Na_{ij}(x_i-x_j)$. Then,
it follows that
\begin{equation}\label{conerr}
\xi=(\mathcal {L}_1\otimes I_n)\begin{bmatrix} x_2-x_1\\\vdots\\x_N-x_1\end{bmatrix},
\end{equation}
where $\mathcal {L}_1$
is defined as in \dref{lapc}. Because $\mathcal {L}_1$
is a nonsingular matrix for $\mathcal
{G}$ satisfying Assumption 2, the consensus problem
is solved if and only if $\xi$ asymptotically
converges to zero.
Hereafter, $\xi$ is referred to as the consensus error.
In light of \dref{1c} and \dref{ssda},
it is not difficult to verify that
$\xi$ and $c_i$ satisfy the following dynamics:
\begin{equation}\label{netss1}
\begin{aligned}
\dot{\xi}
&= [I_{N-1}\otimes A+\mathcal {L}_1\widehat{C}\hat{\rho}(\xi)\otimes BK]\xi+(\mathcal {L}_1\otimes B)\omega,\\
\dot{c}_i &=-\varphi_i(c_i-1)+\xi_i^T\Gamma\xi_i,
\end{aligned}
\end{equation}
where $\widetilde{\omega}\triangleq [\omega_2^T-\omega_1^T,\cdots,\omega_N^T-\omega_0^T]^T$,
$\hat{\rho}(\xi)\triangleq{\rm{diag}}(\rho_2(\xi_2^TQ\xi_2),\cdots,\rho_N(\xi_N^TQ\xi_N))$, and
$\widehat{C}\triangleq{\rm{diag}}(c_2,\cdots,c_N)$.

In light of Assumption 1, it is easy to see that
\begin{equation}\label{omega}
\|\widetilde{\omega}\|\leq\sqrt{\sum_{i=2}^N(\upsilon_i+\upsilon_1)^2}.
\end{equation}
By the second equation in \dref{netss1}, we have
\begin{equation}\label{cc0}
\begin{aligned}
c_i(t)
&= c_i(0)\mathrm{e}^{-\varphi_i t}+\int_0^t\mathrm{e}^{-\varphi_i (t-s)}(\varphi_i +\xi_i^T\Gamma\xi_i) ds\\
&= (c_i(0)-1)\mathrm{e}^{-\varphi_i t}+1+\int_0^t\mathrm{e}^{-\varphi_i (t-s)}\xi_i^T\Gamma\xi_ids\\
&\geq 1,
\end{aligned}
\end{equation}
where we have used the fact that $c_i(0)\geq 1$ to get the last inequality.

The following theorem designs the adaptive protocol \dref{ssda}.

\begin{theorem}\label{s2thm1}
Suppose that the communication graph $\mathcal
{G}$ satisfies Assumption 2.
Then, both the consensus error $\xi$
and the coupling gains $c_i$, $i=2,\cdots,N$, in \dref{netss1},
under the adaptive protocol \dref{ssda} with $K$, $\Gamma$, and $\rho_i(\cdot)$ designed as in Lemma \ref{lems2},
are uniformly ultimately bounded.
Furthermore, if $\varphi_i$ is chosen to be small enough such that
$\delta\triangleq\underset{i=1,\cdots,N}{\min}\varphi_i<\tau\triangleq \frac{1}{\lambda_{\max}(Q)}$,
then $\xi$ exponentially converges to the residual set
\begin{equation}\label{d2k}
\mathcal {D}_1\triangleq \left\{\xi:\|\xi\|^2\leq
\frac{2\Pi}{(\tau-\delta)\lambda_{\min}(Q)\underset{i=2,\cdots,N}{\min} q_i}\right\},
\end{equation}
where
\begin{equation}\label{kk}
\Pi\triangleq\frac{\hat{\lambda}_0}{24}\sum_{i=2}^N\varphi_i(\alpha-1)^2+\frac{12}{\hat{{\lambda}}_0}\sigma_{\max}^2(G \mathcal {L}_1)\sum_{i=2}^N(\upsilon_i+\upsilon_1)^2,
\end{equation}
$[q_2,\cdots,q_{N}]^T=(\mathcal {L}_1^T)^{-1}{\bf 1}$,
$G={\rm{diag}}(q_2,\cdots,q_{N})$,
$\alpha=\frac{72}{\hat{\lambda}_0^2}\underset{i=2,\cdots,N}{\max} q_i^2+\underset{i=2,\cdots,N}{\max} \frac{2 q_i^3}{\hat{\lambda}_0^3}$,
and $\hat{\lambda}_0$
denotes the smallest eigenvalue of $G\mathcal {L}_1+\mathcal {L}_1^TG$.
\end{theorem}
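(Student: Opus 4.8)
\emph{Plan.} The idea is to combine a $\rho$‑weighted energy of the consensus error, with weights $q_i$ taken from Lemma~\ref{lemma3}, and the squared deviations of the adaptive gains from the constant $\alpha$ of the statement, and then to force the derivative of this Lyapunov function into the form demanded by Lemma~\ref{lemma5}. With $z_i\triangleq\xi_i^{T}Q\xi_i$, the candidate is
\[
V \;=\; 2\sum_{i=2}^{N}q_i\int_{0}^{z_i}\rho_i(s)\,ds \;+\; \frac{\hat\lambda_0}{12}\sum_{i=2}^{N}(c_i-\alpha)^{2} \;\triangleq\; V_1+V_2 .
\]
Since $z_i\le\int_{0}^{z_i}\rho_i(s)\,ds=\tfrac14[(1+z_i)^{4}-1]$ and $\lambda_{\min}(Q)\|\xi_i\|^{2}\le z_i\le\lambda_{\max}(Q)\|\xi_i\|^{2}$, the function $V$ is positive definite and radially unbounded in the augmented state $x\triangleq(\xi^{T},\,c_2-\alpha,\dots,c_N-\alpha)^{T}$ and is sandwiched between two class‑$\mathcal K_\infty$ functions of $\|x\|$.

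Differentiating $V$ along \dref{netss1}, substituting the gains $K=-B^{T}Q$, $\Gamma=QBB^{T}Q$ of Lemma~\ref{lems2}, and using the ARE \dref{alg1} to replace $A^{T}Q+QA$ by $QBB^{T}Q-I$, I expect $\dot V$ to split into the negative term $-2\xi^{T}(G\hat\rho\otimes I_n)\xi$, the $\sigma$‑modification term $-\tfrac{\hat\lambda_0}{6}\sum_{i=2}^{N}\varphi_i(c_i-\alpha)(c_i-1)$, the disturbance cross term $4\xi^{T}(G\hat\rho\mathcal L_1\otimes QB)\widetilde\omega$, and the collection of $QBB^{T}Q$‑weighted terms
\[
2\xi^{T}(G\hat\rho\otimes QBB^{T}Q)\xi - 2\xi^{T}\!\big[(G\hat\rho\mathcal L_1\widehat C\hat\rho+\hat\rho\widehat C\mathcal L_1^{T}\hat\rho G)\otimes QBB^{T}Q\big]\xi + \tfrac{\hat\lambda_0}{6}\xi^{T}\!\big[(\widehat C-\alpha I)\otimes QBB^{T}Q\big]\xi ,
\]
where $\hat\rho=\hat\rho(\xi)$, $\widehat C$ are as in \dref{netss1}. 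Then three groups must be estimated. (i) The $QBB^{T}Q$‑weighted group: using $c_i\ge1$ (from \dref{cc0}) and $\rho_i(\cdot)\ge1$, i.e. $\hat\rho\ge I$ and $\widehat C\ge I$, together with $G\mathcal L_1+\mathcal L_1^{T}G\ge\hat\lambda_0I$ from Lemma~\ref{lemma3} and Young's inequality (Lemma~\ref{lemma4}), I would show this group is bounded above by $-\tfrac{\hat\lambda_0}{3}\xi^{T}(\hat\rho^{2}\otimes QBB^{T}Q)\xi$ plus terms absorbed into $-2\xi^{T}(G\hat\rho\otimes I_n)\xi$; the size of those absorbed terms is exactly what fixes the stated value of $\alpha$ (the gains, pushed up by the $\xi_i^{T}\Gamma\xi_i$ part of the modified law, must make the effective coupling no smaller than $\alpha$). (ii) The disturbance term: Lemma~\ref{lemma4} gives $4\xi^{T}(G\hat\rho\mathcal L_1\otimes QB)\widetilde\omega\le\tfrac{\hat\lambda_0}{3}\xi^{T}(\hat\rho^{2}\otimes QBB^{T}Q)\xi+\tfrac{24}{\hat\lambda_0}\sigma_{\max}^{2}(G\mathcal L_1)\|\widetilde\omega\|^{2}$, the first summand cancelling the surviving term from (i) and, by \dref{omega}, the second becoming (twice) the disturbance part of $\Pi$ in \dref{kk}. (iii) The $\sigma$‑modification term: $-\varphi_i(c_i-\alpha)(c_i-1)\le-\tfrac12\varphi_i(c_i-\alpha)^{2}+\tfrac12\varphi_i(\alpha-1)^{2}$ contributes $-\tfrac{\hat\lambda_0}{12}\sum_{i=2}^{N}\varphi_i(c_i-\alpha)^{2}$ and (twice) the first part of $\Pi$. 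Collecting everything,
\[
\dot V \;\le\; -2\,\xi^{T}(G\hat\rho\otimes I_n)\xi \;-\; \frac{\hat\lambda_0}{12}\sum_{i=2}^{N}\varphi_i(c_i-\alpha)^{2} \;+\; 2\Pi .
\]

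Step (i) is where I expect the real difficulty: because $\mathcal L_1$ is not symmetric and the diagonal matrices $G$, $\hat\rho$, $\widehat C$ do not commute through it, one cannot diagonalise, so the estimate must be coaxed out of $c_i,\rho_i\ge1$ and the particular $G$ of Lemma~\ref{lemma3}, and getting the constants $\alpha$ and the $12,24,72$ appearing in \dref{kk} to come out right requires care in the Young splits. The two conclusions then follow from the last inequality. Since $\rho_i\ge1$, one has $\xi^{T}(G\hat\rho\otimes I_n)\xi\ge(\min_{i}q_i)\|\xi\|^{2}$ and $\sum\varphi_i(c_i-\alpha)^{2}\ge\delta\sum(c_i-\alpha)^{2}$, so the right‑hand side is $\le-\alpha_3(\|x\|)+2\Pi$ with $\alpha_3$ a positive‑definite quadratic; Lemma~\ref{lemma5} then yields uniform ultimate boundedness of $x$, hence of $\xi$ and of every $c_i$. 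For the sharper claim, assume $\delta<\tau$; then $\int_{0}^{z_i}\rho_i(s)\,ds\le z_i\rho_i(z_i)\le\lambda_{\max}(Q)\rho_i(z_i)\|\xi_i\|^{2}$ gives $-2\xi^{T}(G\hat\rho\otimes I_n)\xi\le-\tau V_1$ with $\tau=1/\lambda_{\max}(Q)$, whence $\dot V\le-\tau V_1-\delta V_2+2\Pi\le-(\tau-\delta)V_1-\delta V+2\Pi$; a standard comparison‑lemma argument then shows $V_1$—and, via $V_1\ge2\lambda_{\min}(Q)(\min_{i}q_i)\|\xi\|^{2}$, also $\|\xi\|^{2}$—converges exponentially to the residual set $\mathcal D_1$ of \dref{d2k}, while the ultimate bound on the $c_i$ is again read off from the $V_2$ term in $V$.
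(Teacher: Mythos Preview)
The gap is in step~(i), and it originates in the Lyapunov candidate itself. Because your $V_1$ carries no $c_i$ weight on the integral, differentiating along \dref{netss1} and symmetrising the Laplacian cross term produces
\[
-2\,\xi^{T}\!\bigl[\hat\rho\,(G\mathcal L_1\widehat C+\widehat C\mathcal L_1^{T}G)\,\hat\rho\otimes QBB^{T}Q\bigr]\xi .
\]
The inner matrix $G\mathcal L_1\widehat C+\widehat C\mathcal L_1^{T}G$ is \emph{not} the one in Lemma~\ref{lemma3} and is in general indefinite: with $\mathcal L_1=\bigl(\begin{smallmatrix}1&-1\\0&1\end{smallmatrix}\bigr)$ one gets $G=\mathrm{diag}(1,2)$ and $G\mathcal L_1\widehat C+\widehat C\mathcal L_1^{T}G=\bigl(\begin{smallmatrix}2c_2&-c_3\\-c_3&4c_3\end{smallmatrix}\bigr)$, whose determinant is negative as soon as $c_3>8c_2$. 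Since the $c_i$ are not bounded a~priori, the cross term can therefore contribute an arbitrarily large positive quantity to $\dot V$, and neither the term $-2\xi^{T}(G\hat\rho\otimes I)\xi$, nor the contribution $\tfrac{\hat\lambda_0}{6}(c_i-\alpha)\xi_i^{T}\Gamma\xi_i$ from $V_2$ (only linear in $c_i$ and carrying no $\rho_i^{2}$ factor), nor any Young split can absorb it. The upper bound you aim for in~(i) is thus unattainable with this $V$.

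The paper repairs this by weighting the integral with $c_i$: its candidate is $\sum_{i}\tfrac{c_iq_i}{2}\int_{0}^{z_i}\rho_i(s)\,ds+\tfrac{\hat\lambda_0}{24}\sum_i(c_i-\alpha)^{2}$. Now the factor $\widehat C\hat\rho G$ sits on the left of $\mathcal L_1$ and $\widehat C\hat\rho$ on the right, so symmetrisation yields $\widehat C\hat\rho(G\mathcal L_1+\mathcal L_1^{T}G)\hat\rho\widehat C\ge\hat\lambda_0\widehat C^{2}\hat\rho^{2}$, which is precisely the estimate Lemma~\ref{lemma3} provides. The price is an additional $\tfrac{\dot c_iq_i}{2}\int_{0}^{z_i}\rho_i$ term in $\dot V$; this is where the specific cubic choice $\rho_i(s)=(1+s)^{3}$ and Young's inequality (Lemma~\ref{lemma4}) are actually used, via $\rho_i^{3/2}(1+z_i)^{3/2}=\rho_i^{2}$, to obtain $\dot c_i\,q_i\!\int_{0}^{z_i}\rho_i\le\bigl(\tfrac{q_i^{3}}{3\hat\lambda_0^{2}}+\tfrac{2\hat\lambda_0}{3}\rho_i^{2}\bigr)\xi_i^{T}\Gamma\xi_i$, and it is this bound that fixes the constant $\alpha$ in the statement. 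Once the cross term is controlled in this way, your treatment of the disturbance in~(ii), of the $\sigma$-modification in~(iii), and of the $\tau$-versus-$\delta$ comparison for the residual set matches the paper's argument.
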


\begin{proof}
Consider the following Lyapunov function candidate:
\begin{equation}\label{lyas1}
V_1=\sum_{i=2}^N\frac{c_iq_i}{2}\int_0^{\xi_i^TQ\xi_i}\rho_i(s)ds+\frac{\hat{\lambda}_0}{24}\sum_{i=2}^N\tilde{c}_i^2,
\end{equation}
where $\tilde{c}_i=c_i-\alpha$.
As mentioned earlier, for a communication graph $\mathcal {G}$ satisfying Assumption 2,
$\mathcal {L}_1$ is a nonsingular $M$-matrix, which, by Lemma \ref{lemma3}, implies that
$G>0$ and $\hat{\lambda}_0>0$.
Furthermore, by noting that
$\rho_i(\cdot)$ are monotonically increasing functions
satisfying $\rho_i(s)\geq1$ for $s>0$ and that
$c_i(t)\geq1$ for $t>0$ as in \dref{cc0}, it is not hard to get
that $V_1$ is positive definite with respect to $\xi_i$ and $\tilde{c}_i$, $i=2,\cdots,N$.

The time derivative of $V_1$
along the trajectory of \dref{netss1} can be obtained as
\begin{equation}\label{lyas2}
\begin{aligned}
\dot{V}_1 &=
\sum_{i=2}^N c_i q_i \rho_i(\xi_i^TQ\xi_i)\xi_i^TQ\dot{\xi}_i\\
&\quad+
\sum_{i=2}^N\frac{\dot{c}_iq_i}{2}\int_0^{\xi_i^TQ\xi_i}\rho_i(s)ds\\
&\quad
+\frac{\hat{\lambda}_0}{12}\sum_{i=2}^N(c_i-\alpha)[-\varphi_i(c_i-1)+\xi_i^T\Gamma\xi_i].
\end{aligned}
\end{equation}
In the rest of this proof, we will use $\hat{\rho}$
and $\rho_i$ instead of $\hat{\rho}(\xi)$
and $\rho_i(\xi_i^TQ\xi_i)$, respectively,
whenever without causing any confusion.

Observe that
\begin{equation}\label{lyas3}
\begin{aligned}
\sum_{i=2}^N c_iq_i \rho_i\xi_i^TQ\dot{\xi}_i&=\xi^T(\widehat{C}\hat{\rho} G \otimes Q)\dot{\xi}\\
&=\frac{1}{2}\xi^T[\widehat{C}\hat{\rho} G \otimes (QA+A^TQ)\\
&\quad-
\widehat{C}\hat{\rho} (G\mathcal {L}_1+\mathcal {L}_1^TG)\widehat{C}\hat{\rho} \otimes QBB^TQ]\xi\\
&\quad +\xi^T (\widehat{C}\hat{\rho} G \mathcal {L}_1\otimes QB)\widetilde{\omega}\\
&\leq\frac{1}{2}\xi^T[\widehat{C}\hat{\rho} G \otimes (QA+A^TQ)-\hat{\lambda}_0
\widehat{C}^2\hat{\rho}^2 \otimes QBB^TQ]\xi\\
&\quad+\xi^T (\widehat{C}\hat{\rho} G \mathcal {L}_1\otimes QB)\widetilde{\omega},
\end{aligned}
\end{equation}
where we have used the fact that $G\mathcal {L}_1+\mathcal {L}_1^TG>0$ to get the first inequality.

Since $\rho_i$ are monotonically
increasing functions satisfying that $\rho_i(s)\geq1$ for $s>0$, it follows that
\begin{equation}\label{lyas4}
\begin{aligned}
\sum_{i=2}^N \dot{c}_iq_i \int_0^{\xi_i^TQ\xi_i}\rho_i(s)ds&
\leq \sum_{i=2}^N \dot{\bar{c}}_i q_i \int_0^{\xi_i^TQ\xi_i}\rho_i(s)ds\\
&\leq \sum_{i=2}^N \dot{\bar{c}}_i q_i \rho_i\xi_i^TQ\xi_i\\
&\leq \sum_{i=2}^N\frac{\dot{\bar{c}}_iq_i^3}{3\hat{\lambda}_0^2}+\sum_{i=2}^N\frac{2}{3}\hat{\lambda}_0\dot{\bar{c}}_i\rho_i^{\frac{3}{2}}(\xi_i^TQ\xi_i)^{\frac{3}{2}}\\
&\leq \sum_{i=2}^N\frac{\dot{\bar{c}}_iq_i^3}{3\hat{\lambda}_0^2}+\sum_{i=1}^N\frac{2}{3}\hat{\lambda}_0\dot{\bar{c}}_i\rho_i^{\frac{3}{2}}(1+\xi_i^TQ\xi_i)^{\frac{3}{2}}\\
&\leq \sum_{i=1}^N(\frac{q_i^3}{3\hat{\lambda}_0^2}+\frac{2}{3}\hat{\lambda}_0\rho_i^2)(\xi_i^TQBB^TQ\xi_i),
\end{aligned}
\end{equation}
where we have used the fact that $\dot{c}_i\leq \dot{\bar{c}}_i$ ($\dot{\bar{c}}_i$ is defined in \dref{ssda0})
to get the first inequality,
used the well-known mean value theorem for integrals
to obtain the second inequality, and used Lemma \ref{lemma4} to get the third inequality.

Substituting \dref{lyas3} and \dref{lyas4} into \dref{lyas2} yields
\begin{equation}\label{lyas5}
\begin{aligned}
\dot{V}_1 &\leq
\frac{1}{2}\xi^T[\widehat{C}\hat{\rho} G \otimes (QA+A^TQ)]\xi\\
&\quad-\sum_{i=1}^N
[\hat{\lambda}_0(\frac{1}{2}c_i^2\rho_i^2-\frac{1}{12}c_i-\frac{1}{3}\rho_i^2)+\frac{1}{12}(\hat{\lambda}_0\alpha-
\frac{2 q_i^3}{\hat{\lambda}_0^2})]\xi^T_iQBB^TQ\xi_i\\
&\quad +\frac{\hat{\lambda}_0}{24}\sum_{i=1}^N\varphi_i[-\tilde{c}_i^2+(\alpha-1)^2]+\xi^T (\widehat{C}\hat{\rho} G \mathcal {L}_1\otimes QB)\widetilde{\omega},
\end{aligned}
\end{equation}
where we have used the following fact:
\begin{equation}\label{sskk2}
-(c_i-\alpha)(c_i-1)=-\tilde{c}_i(\tilde{c}_i+\alpha-1)\leq -\frac{1}{2}
\tilde{c}_i^2+\frac{1}{2}(\alpha-1)^2.
\end{equation}
By noting that $\rho_i\geq 1$
and $c_i\geq1$, $i=1,\cdots,N$, and
that $\alpha=2\hat{\alpha}+\underset{i=1,\cdots,N}{\max}\frac{2 q_i^3}{\hat{\lambda}_0^3}$, where $\hat{\alpha}=\frac{ 36}{\hat{\lambda}_0^2}\underset{i=1,\cdots,N}{\max} q_i^2$,
it follows from \dref{lyas5}
that
\begin{equation}\label{lyas7}
\begin{aligned}
\dot{V}_1 &\leq
\frac{1}{2}\xi^T[\widehat{C}\hat{\rho} G \otimes (QA+A^TQ)]\xi-\frac{\hat{\lambda}_0}{12}\sum_{i=1}^N
(c_i^2\rho_i^2+2\hat{\alpha}) \xi^T_iQBB^TQ\xi_i\\
&\quad+\frac{\hat{\lambda}_0}{24}\sum_{i=1}^N\varphi_i[-\tilde{c}_i^2+(\alpha-1)^2]
+\xi^T (\widehat{C}\hat{\rho} G \mathcal {L}_1\otimes QB)\widetilde{\omega}.
\end{aligned}
\end{equation}
Note that
\begin{equation}\label{lyas70}
\begin{aligned}
2\xi^T (\widehat{C}\hat{\rho} G \mathcal {L}_1\otimes QB)\omega&=2\xi^T (\sqrt{\frac{\hat{\lambda}_0}{12}}\widehat{C}\hat{\rho}\otimes QB)(\sqrt{\frac{12}{\hat{\lambda}_0}}G \mathcal {L}_1\otimes I)\widetilde{\omega}\\
&\leq \frac{\hat{\lambda}_0}{12}\xi^T (\widehat{C}^2\hat{\rho}^2\otimes QBB^TQ)\xi+\frac{12}{\hat{\lambda}_0 }\|(G \mathcal {L}_1\otimes I)\widetilde{\omega}\|^2\\
&\leq \frac{\hat{\lambda}_0}{12}\xi^T (\widehat{C}^2\hat{\rho}^2\otimes QBB^TQ)\xi+\frac{12}{\hat{{\lambda}}_0}\sigma_{\max}^2(G \mathcal {L}_1)\sum_{i=2}^N(\upsilon_i+\upsilon_1)^2,
\end{aligned}
\end{equation}
where we have used \dref{omega} to get the last inequality.
Then, substituting \dref{lyas70} into \dref{lyas7} gives
\begin{equation}\label{lyas701}
\begin{aligned}
\dot{V}_1 &\leq
\frac{1}{2}\xi^T[\widehat{C}\hat{\rho} G \otimes (QA+A^TQ)\\
&\quad-\frac{\hat{\lambda}_0}{24}(\widehat{C}^2\hat{\rho}^2+4\hat{\alpha}I)\otimes QBB^TQ)] \xi
-\frac{\hat{\lambda}_0}{24}\sum_{i=2}^N\varphi_i\tilde{c}_i^2+\Pi\\
&\leq
\frac{1}{2} W(\xi)-\frac{\hat{\lambda}_0}{24}\sum_{i=1}^N\varphi_i\tilde{c}_i^2+\Pi,
\end{aligned}
\end{equation}
where we have used the assertion that
$\frac{\hat{\lambda}_0}{12}(\widehat{C}^2\hat{\rho}^2+4\hat{\alpha}I)\geq\frac{\hat{\lambda}_0}{6}\sqrt{\hat{\alpha}}\widehat{C}\hat{\rho}\geq \widehat{C}\hat{\rho} G$
if $\sqrt{\hat{\alpha}}I\geq\frac{6}{\hat{\lambda}_0}G$ to get the last inequality, $\Pi$ is defined as in \dref{kk}, and
$$\begin{aligned}
W(\xi)&\triangleq\xi^T[\widehat{C}\hat{\rho} G \otimes (QA+A^TQ-QBB^TQ)]\xi\\
&=-\xi^T(\widehat{C}\hat{\rho} G \otimes I)\xi\\
&\leq0.
\end{aligned}$$
Therefore, we can verify that
$\frac{1}{2}W(\xi)-\frac{\hat{\lambda}_0}{24}\sum_{i=2}^N\varphi_i\tilde{c}_i^2$ is negative definite.
In virtue of the results in \cite{corless1981continuous},
we get that both the consensus error $\xi$
and the adaptive gains $c_i$
are uniformly ultimately bounded.

Note that \dref{lyas7} can be rewritten into
\begin{equation}\label{lyas8}
\begin{aligned}
\dot{V}_1 &\leq -\delta V_1+\delta V_1+\frac{1}{2}W(\xi)-\frac{\hat{\lambda}_0}{24}\sum_{i=2}^N\varphi_i\tilde{c}_i^2+\Pi.
\end{aligned}
\end{equation}
Because $\rho_i$ are monotonically
increasing and satisfy $\rho_i(s)\geq1$ for $s>0$, as shown in \dref{lyas4}, we have
\begin{equation}\label{lyas9}
\begin{aligned}
\sum_{i=2}^N c_i q_i \int_0^{\xi_i^TQ\xi_i}\rho_i(s)ds
&\leq \sum_{i=2}^Nc_i q_i\rho_i\xi_i^TQ\xi_i\\
&=\xi^T(\widehat{C}\hat{\rho} G \otimes Q)\xi.
\end{aligned}
\end{equation}
Substituting \dref{lyas9} into \dref{lyas8} yields
\begin{equation}\label{lyas10}
\begin{aligned}
\dot{V}_1 &\leq -\delta V_1+\frac{1}{2}\tilde{W}(\xi)-\frac{\hat{\lambda}_0}{24}\sum_{i=2}^N(\varphi_i-\delta)\tilde{c}_i^2\\
&\quad-\frac{\tau-\delta}{2}\xi^T(\widehat{C}\hat{\rho} G\otimes Q)\xi+\Pi,
\end{aligned}
\end{equation}
where
$\tilde{W}(\xi)\triangleq\xi^T[\widehat{C}\hat{\rho} G \otimes(-I+\tau Q)]\xi.
$
Because $\tau= \frac{1}{\lambda_{\max}(Q)}$,
we can obtain that
$\tilde{W}(\xi)\leq0$. Then, it follows from \dref{lyas10} that
\begin{equation}\label{lyas101}
\begin{aligned}
\dot{V}_1 &\leq -\delta V_1-\frac{\tau-\delta}{2}\lambda_{\min}(Q)(\underset{i=2,\cdots,N}{\min} q_i)\|\xi\|^2+\Pi,
\end{aligned}
\end{equation}
where we have used the facts that $\varphi_i\geq\delta$, $\delta<\tau$, $i=2,\cdots,N$,
$\widehat{C}\geq I$, $\hat{\rho} \geq I$,
and $G>0$.
Obviously, it follows from \dref{lyas10} that $\dot{V}_1\leq -\delta V_1$ if
$\|\xi\|^2>\frac{2\Pi}{(\tau-\delta)\lambda_{\min}(Q)\underset{i=2,\cdots,N}{\min}  q_i }.$
Then,
we can get that if $\delta<\tau$ then
$\xi$ exponentially converges to the residual set $\mathcal {D}_1$ in \dref{d2k}
with a convergence rate faster than ${\rm{e}}^{-\delta t}$.
\end{proof}

\begin{remark}
It is well known that there exists a unique solution $Q>0$ to
the ARE \dref{alg1} if $(A,B)$ is stabilizable \cite{zhou1998essentials}. Therefore, a
sufficient condition for the existence of an adaptive protocol \dref{ssda}
satisfying Theorem \ref{s2thm1} is that $(A,B)$ is stabilizable.
The consensus protocol \dref{ssda} can also be equivalently designed
by solving the linear matrix inequality: $AP+PA^T-2BB^T<0$,
as in \cite{li2010consensus,li2014TAC}.
In this case, the parameters in \dref{ssda} can be chosen
as $K=-B^TP^{-1}$, $\Gamma=P^{-1}BB^TP^{-1}$,
and $\rho_i=(1+\xi_i^TP^{-1}\xi_i)^3$.
Similar to the adaptive protocol \dref{ssda0} in Lemma \ref{lems2},
the adaptive protocol \dref{ssda} in Theorem \ref{s2thm1}, depending only
on the agent dynamics and the relative states of neighboring
agents, can be constructed and implemented in a fully distributed fashion.
\end{remark}

\begin{remark}
Theorem \ref{s2thm1} shows that
the modified adaptive protocol \dref{ssda}
can ensure the ultimate boundedness of the consensus error $\xi$
and the adaptive gains $c_{i}$ for the agents in
\dref{1c}, implying that \dref{ssda} is indeed robust in the presence
of bounded external disturbances.
From \dref{d2k}, it can be observed
that the upper bound of the consensus error $\xi$
depends on the communication graph, the upper bounds of the external disturbances,
and the parameters $\varphi_i$ of the adaptive protocol \dref{ssda}.
Roughly speaking, $\varphi_i$ should be chosen to be relatively small
in order to ensure a smaller bound for
$\xi$.
\end{remark}

\section{Distributed Robust Adaptive Protocols for Strongly Connected Graphs in The Presence of External Disturbances}

The results in the previous section are applicable to the case where there exists a leader.
In this section, we extend to consider the case where the communication graph among
the agents is directed and does not contain a leader.

The dynamics of the $N$ agents are still described by \dref{1c}.
The communication graph $\mathcal {G}$ among the $N$ agents is
assumed to be strongly connected in this section.

Based on the relative states of neighboring agents,
we propose the following distributed adaptive consensus protocol:
\begin{equation}\label{ssdanl2}
\begin{aligned}
u_i &=d_i\rho_i(\zeta_i^TQ\zeta_i)K\zeta_i,\\
\dot{d}_i &=-\varphi_i(d_i-1)+\zeta_i^T\Gamma\zeta_i,\quad i=1,\cdots,N,
\end{aligned}
\end{equation}
where $\zeta_i\triangleq\sum_{j=1}^Na_{ij}(x_i-x_j)$,
$d_i(t)$ denotes the
time-varying coupling gain associated with the $i$-th agent
with $d_i(0)\geq1$,
$\varphi_i$, $i=1,\cdots,N$, are small positive constants,
and the rest of the variables are defined as in \dref{ssda}.

Let $\zeta=[\zeta_1^T,\cdots,\zeta_N^T]^T$ and $x=[x_1^T,\cdots,x_N^T]^T$. Then, $\zeta=(\mathcal {L}\otimes I_n)x$,
where $\mathcal {L}$ denotes the Laplacian matrix associated with $\mathcal {G}$.
Since $\mathcal {G}$ is strongly connected,
it is well known via Lemma \ref{lemma1} that the consensus problem
is solved if and only if $\zeta$ asymptotically
converges to zero.
Hereafter, we refer to $\zeta$ as the consensus error.
In virtue of \dref{1c} and \dref{ssdanl2}, it is not difficult to get that
$\zeta$ and $d_i$ satisfy the following dynamics:
\begin{equation}\label{netssnl2}
\begin{aligned}
\dot{\zeta}
&= [I_N\otimes A+\mathcal {L}\widehat{D}\tilde{\rho}(\zeta)\otimes BK]\zeta+(\mathcal {L} \otimes B)\omega,\\
\dot{d}_i &=-\varphi_i(d_i-1)+\zeta_i^T\Gamma\zeta_i,
\end{aligned}
\end{equation}
where $\tilde{\rho}(\zeta)\triangleq{\rm{diag}}(\rho_1(\zeta_1^TQ \zeta_1),\cdots,\rho_N(\zeta_N^TQ \zeta_N))$,
$\omega\triangleq [\omega_1^T,\cdots,\omega_N^T]^T$, and
$\widehat{D}\triangleq{\rm{diag}}(d_1,\cdots,d_N)$.

\begin{theorem}\label{thm2}
Suppose that the communication graph $\mathcal
{G}$ is strongly connected and Assumption 1 holds.
Then, both the consensus error $\zeta$
and the coupling gains $d_i$, $i=1,\cdots,N$, in \dref{netssnl2},
under the adaptive protocol \dref{ssdanl2} with $K$, $\Gamma$, and $\rho_i$ designed as in Theorem \ref{s2thm1},
are uniformly ultimately bounded.
Furthermore, if $\psi_i$ is chosen to be small enough such that
$\varepsilon\triangleq\underset{i=1,\cdots,N}{\min}\psi_i<\tau\triangleq \frac{1}{\lambda_{\max}(Q)}$,
then $\zeta$ exponentially converges to the residual set
\begin{equation}\label{d1k}
\mathcal {D}_2\triangleq \left\{\zeta:\|\zeta\|^2\leq
\frac{2\Xi}{(\tau-\varepsilon)\lambda_{\min}(Q)\underset{i=1,\cdots,N}{\min} r_i}\right\},
\end{equation}
where $[r_1,\cdots,r_{N}]^T$ is the positive left eigenvector of $\mathcal
{L}$ associated with the zero eigenvalue, $\lambda_2(\widehat{\mathcal {L}})$
denotes the smallest nonzero eigenvalue of $\widehat{\mathcal {L}}\triangleq R\mathcal {L}+\mathcal {L}^TR$,  $R\triangleq{\rm{diag}}(r_1,\cdots,r_{N})>0$,
$\beta=\frac{72N^2}{\lambda_2(\widehat{\mathcal {L}})}\underset{i=1,\cdots,N}{\max} r_i^2+\underset{i=1,\cdots,N}{\max}\frac{2r_i^3N^3}{\lambda_2(\widehat{\mathcal {L}})^3}$,
and
\begin{equation}\label{kk}
\Xi\triangleq\frac{\lambda_2(\widehat{\mathcal {L}})}{24N}\sum_{i=1}^N\varphi_i(\alpha-1)^2+\frac{12N}{\lambda_2(\widehat{\mathcal {L}})}\sigma_{\max}^2(R \mathcal {L})\sum_{i=1}^N\upsilon_i^2.
\end{equation}
\end{theorem}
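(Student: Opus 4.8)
The plan is to follow the architecture of the proof of Theorem~\ref{s2thm1} essentially line for line, replacing the nonsingular $M$-matrix machinery (Lemma~\ref{lemma3}, the positive definite matrix $G\mathcal{L}_1+\mathcal{L}_1^TG$, the constant $\hat\lambda_0$, the weights $q_i$) by the weighted-Laplacian machinery of Lemma~\ref{lemma2}: the matrix $\widehat{\mathcal{L}}=R\mathcal{L}+\mathcal{L}^TR$, the constant $\lambda_2(\widehat{\mathcal{L}})/N$, and the weights $r_i$. First I would record the structural fact that rescues the leaderless directed case: since $\zeta=(\mathcal{L}\otimes I_n)x$ and $r^T\mathcal{L}=0$, the consensus error obeys $\sum_{i=1}^N r_i\zeta_i=0$ for all $t$, and this subspace is invariant along \dref{netssnl2} because $\dot\zeta=(\mathcal{L}\otimes I_n)\dot x$ again lies in $\mathrm{Range}(\mathcal{L})\otimes\mathbf{R}^n=\{y:(r^T\otimes I_n)y=0\}$. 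Then I would take
\[
V_2=\sum_{i=1}^N\frac{d_ir_i}{2}\int_0^{\zeta_i^TQ\zeta_i}\rho_i(s)\,ds+\frac{\lambda_2(\widehat{\mathcal{L}})}{24N}\sum_{i=1}^N\tilde d_i^2,\qquad \tilde d_i=d_i-\beta,
\]
which, by $d_i(t)\ge1$ (the computation in \dref{cc0} carries over verbatim), $\rho_i(s)\ge1$, $r_i>0$ and $\lambda_2(\widehat{\mathcal{L}})>0$ from Lemma~\ref{lemma2}, is positive definite in $\zeta_i,\tilde d_i$.

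Differentiating $V_2$ along \dref{netssnl2} with $K=-B^TQ$, $\Gamma=QBB^TQ$ and invoking the ARE \dref{alg1}, the leading group $\sum_i d_ir_i\rho_i\zeta_i^TQ\dot\zeta_i=\zeta^T(\widehat{D}\tilde\rho R\otimes Q)\dot\zeta$ reduces, after symmetrizing ($R$ commutes with $\widehat{D}$ and $\tilde\rho$, and $R\mathcal{L}+\mathcal{L}^TR=\widehat{\mathcal{L}}$), to
\[
\tfrac12\zeta^T[\widehat{D}\tilde\rho R\otimes(QA+A^TQ)]\zeta-\tfrac12\zeta^T(\widehat{D}\tilde\rho\,\widehat{\mathcal{L}}\,\widehat{D}\tilde\rho\otimes QBB^TQ)\zeta+\zeta^T(\widehat{D}\tilde\rho R\mathcal{L}\otimes QB)\omega,
\]
the analogue of \dref{lyas3}. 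The remaining groups are routine and copy \dref{lyas4}--\dref{sskk2}: $\dot d_i\le\zeta_i^T\Gamma\zeta_i$ because $d_i\ge1$, then the mean value theorem for integrals and Young's inequality (Lemma~\ref{lemma4}) bound $\sum_i\dot d_ir_i\int_0^{\zeta_i^TQ\zeta_i}\rho_i\,ds$ by $\sum_i\big(\tfrac{r_i^3N^3}{3\lambda_2^2}+\tfrac{2\lambda_2}{3N}\rho_i^2\big)\zeta_i^TQBB^TQ\zeta_i$; the identity $-(d_i-\beta)(d_i-1)\le-\tfrac12\tilde d_i^2+\tfrac12(\beta-1)^2$ handles the $\sigma$-modification terms; and Young's inequality together with $\|\omega\|\le\sqrt{\sum_i\upsilon_i^2}$ (Assumption~1) splits the disturbance cross term as $2\zeta^T(\widehat{D}\tilde\rho R\mathcal{L}\otimes QB)\omega\le\frac{\lambda_2}{12N}\zeta^T(\widehat{D}^2\tilde\rho^2\otimes QBB^TQ)\zeta+\frac{12N}{\lambda_2}\sigma_{\max}^2(R\mathcal{L})\sum_i\upsilon_i^2$.

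The one genuinely new step, which I expect to be the main obstacle, is the analogue of the inequality ``$\widehat{C}\hat\rho(G\mathcal{L}_1+\mathcal{L}_1^TG)\widehat{C}\hat\rho\ge\hat\lambda_0\widehat{C}^2\hat\rho^2$'' used in \dref{lyas3}, namely
\[
\zeta^T(\widehat{D}\tilde\rho\,\widehat{\mathcal{L}}\,\widehat{D}\tilde\rho\otimes QBB^TQ)\zeta\;\ge\;\frac{\lambda_2(\widehat{\mathcal{L}})}{N}\,\zeta^T(\widehat{D}^2\tilde\rho^2\otimes QBB^TQ)\zeta .
\]
Here $\widehat{\mathcal{L}}$ is merely positive semidefinite ($\mathbf{1}$ lies in its kernel), and although $\zeta$ sits in the good subspace $\sum_i r_i\zeta_i=0$, the rescaled vector $\widehat{D}\tilde\rho\zeta$ does not. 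The device is to use the freedom of the weight in Lemma~\ref{lemma2}: factor $QBB^TQ=\sum_\ell m_\ell m_\ell^T$ with $m_\ell$ the columns of $QB$, so that each summand becomes a scalar form $(\widehat{D}\tilde\rho\nu_\ell)^T\widehat{\mathcal{L}}(\widehat{D}\tilde\rho\nu_\ell)$ with $\nu_\ell=[m_\ell^T\zeta_1,\dots,m_\ell^T\zeta_N]^T$; the constraint $\sum_i r_i\zeta_i=0$ gives $r^T\nu_\ell=0$, hence $\widetilde r^{\,T}(\widehat{D}\tilde\rho\nu_\ell)=0$ for $\widetilde r\triangleq(\widehat{D}\tilde\rho)^{-1}r$, which has strictly positive entries since $d_i,\rho_i,r_i>0$; Lemma~\ref{lemma2} applied with this positive vector $\widetilde r$ yields $(\widehat{D}\tilde\rho\nu_\ell)^T\widehat{\mathcal{L}}(\widehat{D}\tilde\rho\nu_\ell)\ge\frac{\lambda_2(\widehat{\mathcal{L}})}{N}\|\widehat{D}\tilde\rho\nu_\ell\|^2$, and summing over $\ell$ gives the claim. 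Recognizing that Lemma~\ref{lemma2} may be invoked with this state-dependent weight is precisely what substitutes for the positive definiteness of $G\mathcal{L}_1+\mathcal{L}_1^TG$ enjoyed in the leader case.

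With that bound the rest is bookkeeping identical to Theorem~\ref{s2thm1}: choosing $\beta$ as in the statement makes the completing-the-square step $\frac{\lambda_2}{12N}(\widehat{D}^2\tilde\rho^2+4\hat\beta I)\ge\widehat{D}\tilde\rho R$ (with $\hat\beta$ the counterpart of $\hat\alpha$) valid, so every $QBB^TQ$-weighted term is absorbed and, using the ARE once more, $\dot V_2\le\tfrac12 W(\zeta)-\frac{\lambda_2(\widehat{\mathcal{L}})}{24N}\sum_{i=1}^N\varphi_i\tilde d_i^2+\Xi$ with $W(\zeta)=-\zeta^T(\widehat{D}\tilde\rho R\otimes I)\zeta\le0$; Lemma~\ref{lemma5} then gives the uniform ultimate boundedness of $\zeta$ and of the $d_i$. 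For the residual set I would rewrite this as $\dot V_2\le-\varepsilon V_2+\varepsilon V_2+\tfrac12W(\zeta)-\frac{\lambda_2(\widehat{\mathcal{L}})}{24N}\sum_i\varphi_i\tilde d_i^2+\Xi$ (with $\varepsilon$ bounding the damping rates $\varphi_i$ from below), bound $\sum_i d_ir_i\int_0^{\zeta_i^TQ\zeta_i}\rho_i\,ds\le\zeta^T(\widehat{D}\tilde\rho R\otimes Q)\zeta$, use $\tau=1/\lambda_{\max}(Q)$ so that $\zeta^T(\widehat{D}\tilde\rho R\otimes(-I+\tau Q))\zeta\le0$, and use $\varphi_i\ge\varepsilon$, $\widehat{D}\ge I$, $\tilde\rho\ge I$, $R>0$ to reach $\dot V_2\le-\varepsilon V_2-\frac{\tau-\varepsilon}{2}\lambda_{\min}(Q)(\min_i r_i)\|\zeta\|^2+\Xi$; hence $\dot V_2\le-\varepsilon V_2$ as soon as $\|\zeta\|^2>\frac{2\Xi}{(\tau-\varepsilon)\lambda_{\min}(Q)\min_i r_i}$, which yields the exponential convergence of $\zeta$ into $\mathcal{D}_2$ at a rate faster than $\mathrm{e}^{-\varepsilon t}$.
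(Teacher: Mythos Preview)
Your proposal is correct and follows essentially the same route as the paper's own proof: the same Lyapunov candidate $V_2$, the same computation of $\dot V_2$, the same Young/mean-value bounds on the integral term, the same $\sigma$-modification estimate \dref{sskk2}, the same disturbance splitting, the same completing-the-square with $\beta$, and the same $\varepsilon$--$\tau$ rewriting for the residual set. In particular, you correctly identified the one genuinely new ingredient relative to Theorem~\ref{s2thm1}, namely that Lemma~\ref{lemma2} must be invoked with the \emph{state-dependent} positive weight $\widetilde r=(\widehat D\tilde\rho)^{-1}r$, which is exactly what the paper does via $\bar\zeta=(\widehat D\tilde\rho\otimes I_n)\zeta$ and the observation $(\widehat D^{-1}\tilde\rho^{-1}r\otimes I_n)^T\bar\zeta=(r\otimes I_n)^T\zeta=0$. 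Your rank-one factorization $QBB^TQ=\sum_\ell m_\ell m_\ell^T$ to reduce the Kronecker inequality to the scalar setting of Lemma~\ref{lemma2} is in fact a cleaner justification of the passage from \dref{lyasnl3x} to \dref{lyasnl3y} than the paper's terse ``substituting'', but the underlying idea is the same. (One cosmetic slip: your Young bound should read $\tfrac{r_i^3N^2}{3\lambda_2^2}$ rather than $\tfrac{r_i^3N^3}{3\lambda_2^2}$, matching \dref{lyaz4}; the extra factor of $N$ appears only after combining with the $\beta$-term as in \dref{lyaz5}.)
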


\begin{proof}
Consider the following Lyapunov function candidate:
\begin{equation}\label{lyaz1}
V_2=\sum_{i=1}^N\frac{d_ir_i}{2}\int_0^{\zeta_i^TP \zeta_i}\rho_i(s)ds
+\frac{\lambda_2(\widehat{\mathcal {L}})}{24N}\sum_{i=1}^N\tilde{d}_i^2,
\end{equation}
where $\tilde{d}_i=d_i-\beta$.
Similarly as shown in \dref{cc0}, it is easy to see
that $d_i(t)\geq1$ for $t>0$. Furthermore, by noting that
$\rho_i(\cdot)$ are monotonically increasing functions
satisfying $\rho_i(s)\geq1$ for $s>0$, it is not difficult to see
that $V_2$ is positive definite.

The time derivative of $V_2$
along the trajectory of \dref{netssnl2} is given by
\begin{equation}\label{lyaz2}
\begin{aligned}
\dot{V}_2 &=
\sum_{i=1}^Nd_ir_i \rho_i(\zeta_i^TP\zeta_i)\zeta_i^TQ\dot{\zeta}_i\\
&\quad+
\sum_{i=1}^N\frac{\dot{d}_ir_i}{2 }\int_0^{\zeta_i^TQ\zeta_i}\rho_i(s)ds\\
&\quad
+\frac{\lambda_2(\widehat{\mathcal {L}})}{12N}\sum_{i=1}^N(d_i-\beta)[-\varphi_i(d_i-1)+\zeta_i^T\Gamma\zeta_i].
\end{aligned}
\end{equation}

By using \dref{netssnl2} and making some mathematical manipulations,
we can get that
\begin{equation}\label{lyasnl3}
\begin{aligned}
\sum_{i=1}^N d_ir_i\rho_i\zeta_i^TQ^{-1}\dot{\zeta}_i&=\zeta^T(\widehat{D}\tilde{\rho} R \otimes Q )\dot{\zeta}\\
&=\frac{1}{2}\zeta^T[\widehat{D}\tilde{\rho} R \otimes (QA+A^TQ)\\&\quad
-\widehat{D}\tilde{\rho} \widehat{\mathcal {L}}\widehat{D}\tilde{\rho} \otimes Q BB^TQ ]\zeta
+\zeta^T (\widehat{D}\tilde{\rho} R \mathcal {L} \otimes PB)\omega.
\end{aligned}
\end{equation}
Let $\bar{\zeta}=(\widehat{D}\tilde{\rho}\otimes I_n)\zeta$.
By the definitions of $\zeta$ and $\bar{\zeta}$, we have
$$\begin{aligned}\bar{\zeta}^T(\widehat{D}^{-1}\tilde{\rho}^{-1}r\otimes I_n)&=\zeta^T(r \otimes I_n)\\
&=x^T(\mathcal {L}^Tr \otimes I_n)=0,
\end{aligned}$$
where we have used fact that $r^T\mathcal {L}=0$.
Since every entry of $r$ is positive,
it is easy to see that every entry of $\widehat{D}^{-1}\tilde{\rho}^{-1}r\otimes I_n$ is also positive. In light of Lemma \ref{lemma2},
we get that
\begin{equation}\label{lyasnl3x}
\begin{aligned}
\bar{\zeta}^T(\widehat{\mathcal {L}}\otimes I_n)\bar{\zeta}&>\frac{\lambda_2(\widehat{\mathcal {L}})}{N}\bar{\zeta}^T\bar{\zeta}\\
&=\frac{\lambda_2(\widehat{\mathcal {L}})}{N}\zeta^T(\widehat{D}^2\tilde{\rho}^2\otimes I_n)\zeta.
\end{aligned}
\end{equation}
Substituting \dref{lyasnl3x} into \dref{lyasnl3} gives
\begin{equation}\label{lyasnl3y}
\begin{aligned}
\sum_{i=1}^N d_ir_i\rho_i\zeta_i^TQ\dot{\zeta}_i
&\leq\frac{1}{2}\zeta^T[\widehat{D}\tilde{\rho} R \otimes (QA+A^TQ)-\frac{\lambda_2(\widehat{\mathcal {L}})}{N}
\widehat{D}^2\tilde{\rho}^2 \otimes QBB^TQ]\zeta \\&\quad+\zeta^T (\widehat{D}\tilde{\rho} R \mathcal {L} \otimes QB)\omega.
\end{aligned}
\end{equation}

Similar to \dref{lyas4}, we can obtain that
\begin{equation}\label{lyaz4}
\begin{aligned}
\sum_{i=1}^N \dot{d}_ir_i \int_0^{\zeta_i^TQ\zeta_i}\rho_i(s)ds\leq \sum_{i=1}^N[\frac{r_i^3N^2}{3\lambda_2(\widehat{\mathcal {L}})^2}+\frac{2\lambda_2(\widehat{\mathcal {L}})}{3N}\rho_i^2]\zeta_i^TQBB^TQ\zeta_i,
\end{aligned}
\end{equation}

Substituting \dref{lyasnl3y} and \dref{lyaz4} into \dref{lyaz2} yields
\begin{equation}\label{lyaz5}
\begin{aligned}
\dot{V}_2 &\leq
\frac{1}{2}\zeta^T[\widehat{D}\tilde{\rho} G \otimes (QA+A^TQ)]\zeta\\
&\quad-\sum_{i=1}^N
[\frac{\lambda_2(\widehat{\mathcal {L}})}{N}(\frac{1}{2}d_i^2\rho_i^2-\frac{1}{12}d_i-\frac{1}{3}\rho_i^2)
\\
&\quad+\frac{1}{12N}(\beta\lambda_2(\widehat{\mathcal {L}})-
\frac{2r_i^3N^3}{\lambda_2(\widehat{\mathcal {L}})^2})]\zeta^T_iQBB^TQ\zeta_i\\
&\quad +\frac{\lambda_2(\widehat{\mathcal {L}})}{24N}\sum_{i=1}^N\varphi_i[-\tilde{d}_i^2+(\beta-1)^2]
+\zeta^T (\widehat{D}\tilde{\rho} R \mathcal {L}\otimes QB)\omega,
\end{aligned}
\end{equation}
where we have used \dref{sskk2}.
Similar to \dref{lyas70}, it is easy to verify that
\begin{equation}\label{lyaz6}
\begin{aligned}
2\zeta^T (\widehat{D}\tilde{\rho} R \mathcal {L} \otimes QB)\omega
\leq \frac{\lambda_2(\widehat{\mathcal {L}})}{12N}\zeta^T (\widehat{D}^2\tilde{\rho}^2\otimes QBB^TQ)\zeta+\frac{12N}{\lambda_2(\widehat{\mathcal {L}})}\sigma_{\max}^2(R \mathcal {L})\sum_{i=1}^N\upsilon_i^2.
\end{aligned}
\end{equation}
Choose $\beta=2\tilde{\beta}+\underset{i=1,\cdots,N}{\max}\frac{2r_i^3N^3}{\lambda_2(\widehat{\mathcal {L}})^3}$, where $\tilde{\beta}=\frac{ 36N^2}{\lambda_2(\widehat{\mathcal {L}})}\underset{i=1,\cdots,N}{\max} r_i^2$. Substituting \dref{lyaz6} into \dref{lyaz5} gives
\begin{equation}\label{lyaz7}
\begin{aligned}
\dot{V}_2 &\leq
\frac{1}{2}\zeta^T[\widehat{D}\tilde{\rho} R \otimes (QA+A^TQ)]\zeta\\
&\quad-\frac{\lambda_2(\widehat{\mathcal {L}})}{24N}
(\widehat{D}^2\rho^2+4\tilde{\beta}I)\otimes \zeta^T_iQBB^TQ\zeta_i\\
&\quad -\frac{\lambda_2(\widehat{\mathcal {L}})}{24N}\sum_{i=1}^N\varphi_i \tilde{d}_i^2+\Xi \\
&\leq
\frac{1}{2} Z(\zeta)-\frac{\lambda_2(\widehat{\mathcal {L}})}{24N}\sum_{i=1}^N\varphi_i\tilde{d}_i^2+\Xi,
\end{aligned}
\end{equation}
where $\Xi$ is defined as in \dref{kk},
$$\begin{aligned}
Z(\zeta)&\triangleq\zeta^T[\widehat{D}\tilde{\rho} R \otimes (QA+A^TQ-QBB^TQ)]\zeta\\
&=-\zeta^T(\widehat{D}\tilde{\rho} R \otimes I)\zeta \leq0,
\end{aligned}$$
and to get the last inequality,
we have used the assertion that if $\sqrt{\tilde{\beta}}I\geq\frac{6N}{\lambda_2(\widehat{\mathcal {L}})}R$, then
$\frac{\lambda_2(\widehat{\mathcal {L}})}{12N}(\widehat{D}^2\tilde{\rho}^2+4\tilde{\beta}I)\geq\frac{\lambda_2(\widehat{\mathcal {L}})}{6N}\sqrt{\tilde{\beta}}\widehat{D}\tilde{\rho}\geq \widehat{D}\tilde{\rho}R.$
Therefore, we can verify that
$\frac{1}{2} Z(\zeta)-\frac{\lambda_2(\widehat{\mathcal {L}})}{24N}\sum_{i=1}^N\varphi_i\tilde{d}_i^2$ is negative definite.
In virtue of Lemma \ref{lemma5},
we get that both the consensus error $\zeta$
and the adaptive gains $d_i$
are uniformly ultimately bounded.

Note that \dref{lyaz7} can be rewritten into
\begin{equation}\label{lyaz8}
\begin{aligned}
\dot{V}_2 &\leq -\varepsilon V_2+\varepsilon V_2+\frac{1}{2} Z(\zeta)-\frac{\lambda_2(\widehat{\mathcal {L}})}{24N}\sum_{i=1}^N\varphi_i\tilde{d}_i^2+\Xi.
\end{aligned}
\end{equation}
As shown in \dref{lyas9}, we have
\begin{equation}\label{lyaz9}
\begin{aligned}
\sum_{i=1}^N d_ir_i\int_0^{\zeta_i^TQ\zeta_i}\rho_i(s)ds
\leq \zeta^T(\widehat{D}\tilde{\rho} R \otimes Q)\zeta.
\end{aligned}
\end{equation}
Substituting \dref{lyaz9} into \dref{lyaz8} yields
\begin{equation}\label{lyaz10}
\begin{aligned}
\dot{V}_2 &\leq -\varepsilon V_2+\frac{1}{2}\tilde{Z}(\zeta)-\frac{\lambda_2(\widehat{\mathcal {L}})}{24N}\sum_{i=1}^N(\varphi_i-\varepsilon)\tilde{d}_i^2\\
&\quad-\frac{\tau-\varepsilon}{2}\zeta^T(\widehat{D}\tilde{\rho} R\otimes Q)\zeta+\Xi,
\end{aligned}
\end{equation}
where
$\tilde{Z}(\zeta)\triangleq\zeta^T[\widehat{D}\tilde{\rho} R \otimes(-I+\tau Q)]\zeta.
$
Because $\tau= \frac{1}{\lambda_{\max}(Q)}$,
we can obtain that
$\tilde{Z}(\zeta)\leq0$. Then, it follows from \dref{lyaz10} that
\begin{equation}\label{lyas10}
\begin{aligned}
\dot{V}_2 &\leq -\varepsilon V_2-\frac{\tau-\varepsilon}{2}\lambda_{\min}(Q)(\underset{i=1,\cdots,N}{\min} r_i)\|\zeta\|^2+\Xi,
\end{aligned}
\end{equation}
where we have used the facts that $\varphi_i\geq\varepsilon$, $\varepsilon<\tau$,
$\widehat{D}\geq I$, $\tilde{\rho} \geq I$,
and $R>0$.
Obviously, it follows from \dref{lyas10} that $\dot{V}_2\leq -\varepsilon V_2$ if
$\|\zeta\|^2>\frac{2\Xi}{(\tau-\varepsilon)\lambda_{\min}(Q)\underset{i=1,\cdots,N}{\min} r_i}.$
Then,
we can get that if $\varepsilon\leq\tau$ then
$\zeta$ exponentially converges to the residual set $\mathcal {D}_2$ in \dref{d2k}
with a convergence rate faster than ${\rm{e}}^{-\varepsilon t}$.
\end{proof}

In the robust adaptive protocol \dref{ssdanl2}, the term $-\varphi_i(d_i-1)$ is inspired by the $\sigma$ modification technique,
which is vital to ensuring the ultimate boundedness of the consensus error $\zeta$ and the adaptive gains $d_{i}$ in the presence of external
disturbances. For the case where the external disturbances in \dref{1c} do not exist, the adaptive protocol \dref{ssdanl2} with the term $-\varphi_i(d_i-1)$ removed, i.e., the following adaptive protocol
\begin{equation}\label{ssdanl20}
\begin{aligned}
u_i &=d_i\rho_i(\zeta_i^TQ\zeta_i)K\zeta_i,\\
\dot{d}_i &=\zeta_i^T\Gamma\zeta_i,\quad i=1,\cdots,N,
\end{aligned}
\end{equation}
can ensure the asymptotical convergence of the consensus error $\zeta$. This is summarized in the following corollary.

\begin{corollary}\label{cor1}
For the $N$ agents described by $\dot{x}_i =Ax_i+Bu_i$,
$i=1,\cdots,N$, whose communication graph $\mathcal
{G}$ is strongly connected,
the consensus error $\zeta$
under the adaptive protocol \dref{ssdanl2} with $K$, $\Gamma$, and $\rho_i$ designed as in Theorem \ref{s2thm1}
asymptotically converges to zero.
Moreover, each coupling gain
$d_{i}$ converges to some finite steady-state value.
\end{corollary}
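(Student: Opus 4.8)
The plan is to replay the proof of Theorem \ref{thm2} verbatim in the special case $\omega_i\equiv0$, $\varphi_i=0$, so as to obtain $\dot V\le 0$, and then to upgrade this to genuine asymptotic convergence by a Barbalat-type argument; the finite-limit claim for the $d_i$ drops out of monotonicity.

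First I would keep the same Lyapunov candidate as in \dref{lyaz1},
\begin{equation*}
V_2=\sum_{i=1}^N\frac{d_ir_i}{2}\int_0^{\zeta_i^TQ\zeta_i}\rho_i(s)\,ds+\frac{\lambda_2(\widehat{\mathcal {L}})}{24N}\sum_{i=1}^N\tilde{d}_i^2,\qquad\tilde d_i=d_i-\beta,
\end{equation*}
with $\beta$ as in Theorem \ref{thm2}. Since $d_i(0)\ge1$ and now $\dot d_i=\zeta_i^T\Gamma\zeta_i=\|B^TQ\zeta_i\|^2\ge0$, each $d_i$ is nondecreasing, hence $d_i(t)\ge1$; together with $\rho_i(s)\ge1$ this makes $V_2$ positive definite in $(\zeta,\tilde d)$ and in particular $V_2\ge\tfrac12\big(\underset{i}{\min}\,r_i\big)\lambda_{\min}(Q)\|\zeta\|^2$. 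Differentiating $V_2$ along \dref{netssnl2} with $\omega\equiv0$ and repeating the chain \dref{lyasnl3}--\dref{lyaz7} while discarding every term carrying a factor $\omega$ or $\varphi_i$, the cross term $\zeta^T(\widehat D\tilde\rho R\mathcal L\otimes QB)\omega$ and the term $-\tfrac{\lambda_2(\widehat{\mathcal L})}{24N}\sum_i\varphi_i\tilde d_i^2$ vanish, and the additive remainder $\Xi$ in \dref{lyaz7} becomes $0$ (both of its summands carry $\varphi_i$ or $\upsilon_i$, and with no disturbance one may take $\upsilon_i=0$); the $\beta$ of Theorem \ref{thm2} still makes the coefficient of each $\zeta_i^TQBB^TQ\zeta_i$ nonnegative, since deleting the nonnegative $\omega$-contribution only relaxes that requirement. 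One is therefore left with
\begin{equation*}
\dot V_2\le\tfrac12 Z(\zeta)=-\tfrac12\,\zeta^T(\widehat D\tilde\rho R\otimes I)\zeta\le-\tfrac12\big(\underset{i=1,\cdots,N}{\min}\,r_i\big)\|\zeta\|^2\le0,
\end{equation*}
where the second inequality uses $\widehat D\ge I$, $\tilde\rho\ge I$, $r_i>0$.

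From $\dot V_2\le0$ we get $V_2(t)\le V_2(0)$ for all $t\ge0$, so the lower bound on $V_2$ gives $\zeta\in L_\infty$, and the term $\tfrac{\lambda_2(\widehat{\mathcal L})}{24N}\sum_i\tilde d_i^2$ gives each $d_i\in L_\infty$. Since $\dot d_i\ge0$, each $d_i$ is monotone and bounded, hence converges to a finite limit, which is the second assertion. For the first assertion, integrating the displayed bound yields $\int_0^\infty\|\zeta(t)\|^2\,dt\le 2V_2(0)/\underset{i}{\min}\,r_i<\infty$, i.e. $\zeta\in L_2$. Because $\zeta\in L_\infty$ and each $d_i\in L_\infty$, also $\rho_i(\zeta_i^TQ\zeta_i)\in L_\infty$, so the right-hand side of $\dot\zeta=[I_N\otimes A+\mathcal L\widehat D\tilde\rho\otimes BK]\zeta$ is bounded; hence $\dot\zeta\in L_\infty$, $\tfrac{d}{dt}\|\zeta\|^2=2\zeta^T\dot\zeta$ is bounded, and $\|\zeta\|^2$ is uniformly continuous. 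Barbalat's lemma then forces $\|\zeta(t)\|^2\to0$, i.e. $\zeta(t)\to0$, which means consensus is achieved.

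The inequality chain is the routine part, being that of Theorem \ref{thm2} with two groups of terms suppressed. The one step requiring care is the upgrade from boundedness to convergence: $\dot V_2\le0$ alone yields only boundedness, and LaSalle's invariance principle is awkward because the closed loop is time varying through $\widehat D(t)$; the clean route is the $L_2\cap L_\infty$/Barbalat argument above, whose only delicate point is verifying that the lone potentially unbounded coefficient in the $\zeta$-dynamics, $\rho_i(\zeta_i^TQ\zeta_i)$, stays bounded — which it does precisely because $\zeta$ has already been shown to be bounded. I would also double-check that the $\beta$ inherited from Theorem \ref{thm2} really does close the estimate here, but this is immediate since the only changes relative to that proof are deletions of nonnegative quantities.
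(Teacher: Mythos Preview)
Your proposal is correct and follows exactly the route the paper indicates: the paper's own proof is simply the sentence ``The above corollary can be proved by following similar steps in the proof of Theorem \ref{thm2},'' and you have carried this out in detail, specializing to $\omega\equiv0$, $\varphi_i=0$ to obtain $\dot V_2\le\tfrac12 Z(\zeta)\le0$ and then correctly supplying the $L_2\cap L_\infty$/Barbalat argument (which the paper leaves implicit) to pass from $\dot V_2\le0$ to genuine asymptotic convergence of $\zeta$, with the monotone-bounded argument for the $d_i$.
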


The above corollary can be proved by following similar steps in the proof of Theorem \ref{thm2}.
The adaptive protocol \dref{ssdanl20} complements the adaptive protocol \dref{ssda0} in \cite{li2014TAC}
which are applicable to directed graphs with a leader.

\begin{remark}
Compared to the previous works \cite{li2014TAC,mei2014consensus} which also present distributed adaptive protocols for directed graphs,
the main contribution of this paper is that distributed robust adaptive protocols are presented,
which can exclude the parameter drift phenomenon encountered
by the adaptive protocols in \cite{li2014TAC,mei2014consensus} in the presence of external disturbances.
Besides, the agents are restricted to be second-order integrators in \cite{mei2014consensus}.
The ultimate boundedness of both the consensus errors and the adaptive gains is shown
and the upper bounds of the consensus errors is given, which are far from being easy.
\end{remark}


\section{Simulation Example}

\begin{figure}[htbp]
\centering
\includegraphics[width=0.5\linewidth]{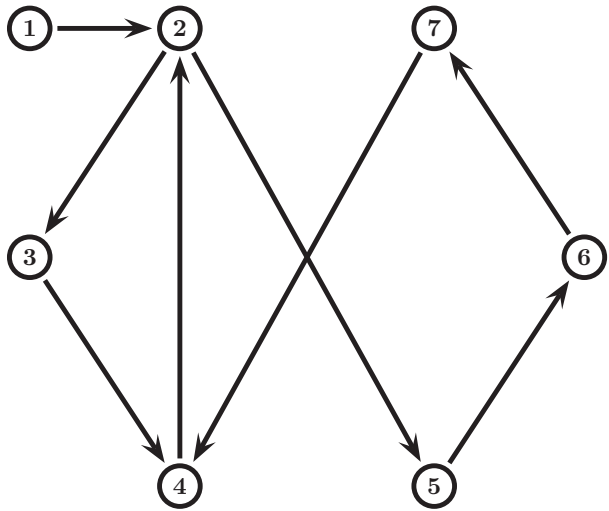}
\caption{The directed communication graph. }
\end{figure}

In this section,
a simulation example is provided for illustration.

Consider a network of double integrators,
described by \dref{1c}, with
$$x_i=\left[\begin{matrix}x_{i1}\\x_{i2}\end{matrix}\right],\quad A=\left[\begin{matrix}0 & 1\\ 0 & 0\end{matrix}\right],\quad
B
=\left[\begin{matrix} 0 \\ 1\end{matrix}\right].$$ 
For illustration, the disturbances associated
with the agents are assumed to be
$\omega_1=0$,
$\omega_2=0.2\sin(t)$
, $\omega_3=0.1\sin(t),$
$\omega_4=0.2\cos(2t)$,
$\omega_5=-0.3{\rm{exp}}^{-2t}$,
$\omega_6=-0.2\sin(x_{51})$,
and $\omega_7=0$.
The communication
graph is given as in Fig. 1, where the vertex indexed by
1 is the leader which is only accessible to the vertex indexed by 2.
It is easy to verify that the graph in Fig. 1 satisfies
Assumption 2.
%

Solving the ARE \dref{alg1} by using MATLAB
gives a solution
$Q =\left[\begin{smallmatrix}1.7321 &  1\\
   -1  &  1.7321\end{smallmatrix}\right].$
Thus, the feedback gain matrices in
\dref{ssda} are obtained as
$$K=-\begin{bmatrix}1  &1.7321\end{bmatrix}, \quad
\Gamma =\begin{bmatrix}1 &   1.7321\\
    1.7321  &  3\end{bmatrix}.$$
To illustrate Theorem \ref{s2thm1}, let $\varphi_i=0.02$ in \dref{ssda} and
the initial states $c_i(0)$ be randomly chosen within the interval $[1,3]$.
The consensus errors $\xi_i$, $i=2,\cdots,7$, of the double integrators, defined as in \dref{conerr},
and the coupling weights
$c_{i}$ associated with the followers, under the adaptive protocol \dref{ssda} with
$K$, $\Gamma$, and $\rho_i$ chosen as in Theorem \ref{s2thm1},
are depicted in in Figs. 2 and 3, respectively,
both of which are clearly bounded.

\begin{figure}[htbp]
\centering
\includegraphics[width=0.7\linewidth,height=0.4\linewidth]{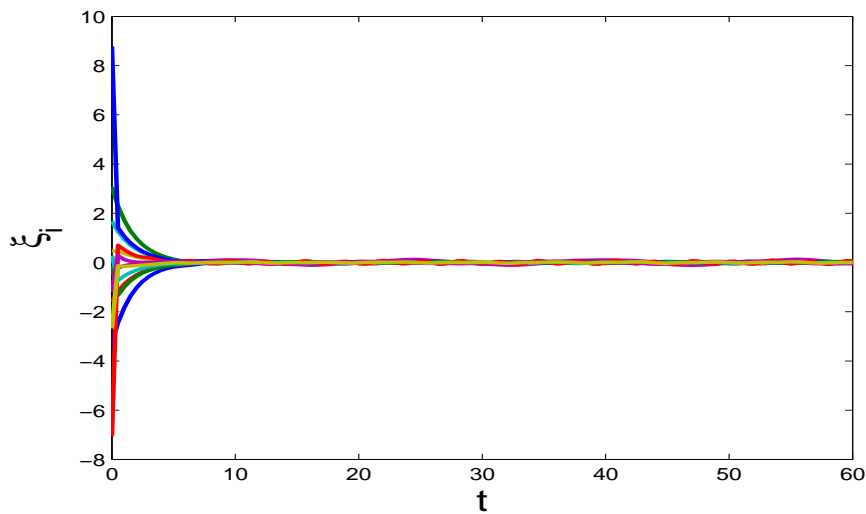}
\caption{The consensus errors $\xi_i$, $i=2,\cdots,7$, of double integrators under the protocol \dref{ssda}. }
\end{figure}

\begin{figure}[htbp]
\centering
\includegraphics[width=0.7\linewidth,height=0.4\linewidth]{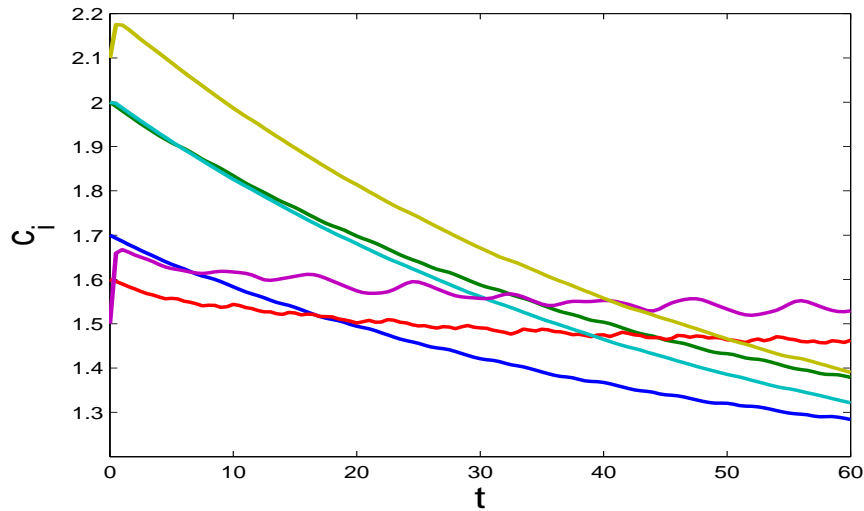}
\caption{The coupling gains $c_{i}$ in \dref{ssda}. }
\end{figure}

\section{Conclusion}
In this paper, we have presented distributed adaptive consensus protocols to achieve consensus
for linear multi-agent systems with directed graphs which are strongly connected or contain a directed spanning
tree with a leader as the root node. Specifically, distributed robust adaptive protocols are designed,
which can guarantee the ultimate boundedness of both the consensus error
and the adaptive gains in the presence of external disturbances. Note that the design of these adaptive protocols
depends only on the agent dynamics and the relative state information of neighboring agents,
which thereby can be done by each agent
in a truly distributed fashion. Interesting future works include designing distributed adaptive consensus
protocols using only relative output information.



\begin{thebibliography}{10}

\bibitem{ren2007information}
W.~Ren, R.~Beard, and E.~Atkins, ``{Information consensus in multivehicle
  cooperative control},'' {\em IEEE Control Systems Magazine}, vol.~27, no.~2,
  pp.~71--82, 2007.

\bibitem{olfati-saber2007consensus}
R.~Olfati-Saber, J.~Fax, and R.~Murray, ``{Consensus and cooperation in
  networked multi-agent systems},'' {\em Proceedings of the IEEE}, vol.~95,
  no.~1, pp.~215--233, 2007.

\bibitem{hong2008distributed}
Y.~Hong, G.~Chen, and L.~Bushnell, ``{Distributed observers design for
  leader-following control of multi-agent networks},'' {\em Automatica},
  vol.~44, no.~3, pp.~846--850, 2008.

\bibitem{olfati-saber2004consensus}
R.~Olfati-Saber and R.~Murray, ``{Consensus problems in networks of agents with
  switching topology and time-delays},'' {\em IEEE Transactions on Automatic
  Control}, vol.~49, no.~9, pp.~1520--1533, 2004.

\bibitem{li2010distributed}
T.~Li, M.~Fu, L.~Xie, and J.~Zhang, ``{Distributed consensus with limited
  communication data rate},'' {\em IEEE Transactions on Automatic Control},
  vol.~56, no.~2, pp.~279--292, 2011.

\bibitem{Guo2013consensus}
M.~Guo and D.~V. Dimarogonas, ``Consensus with quantized relative state
  measurements,'' {\em Automatica}, vol.~49, no.~8, pp.~2531 -- 2537, 2013.

\bibitem{li2013distributed}
Z.~Li, W.~Ren, X.~Liu, and M.~Fu, ``Distributed containment control of
  multi-agent systems with general linear dynamics in the presence of multiple
  leaders,'' {\em International Journal of Robust and Nonlinear Control},
  vol.~23, no.~5, pp.~534--547, 2013.

\bibitem{carli2009quantized}
R.~Carli, F.~Bullo, and S.~Zampieri, ``{Quantized average consensus via dynamic
  coding/decoding schemes},'' {\em International Journal of Robust and
  Nonlinear Control}, vol.~20, no.~2, pp.~156--175, 2009.

\bibitem{RNC:RNC2904}
Y.~Feng, S.~Xu, and B.~Zhang, ``Group consensus control for double-integrator
  dynamic multiagent?systems with fixed communication topology,'' {\em
  International Journal of Robust and Nonlinear Control}, vol.~24, no.~3,
  pp.~532--547, 2014.

\bibitem{RNC:RNC2867}
J.~Xi, Z.~Shi, and Y.~Zhong, ``Stable-protocol output consensualization for
  high-order swarm systems with switching topologies,'' {\em International
  Journal of Robust and Nonlinear Control}, vol.~23, no.~18, pp.~2044--2059,
  2013.

\bibitem{li2010consensus}
Z.~Li, Z.~Duan, G.~Chen, and L.~Huang, ``Consensus of multiagent systems and
  synchronization of complex networks: A unified viewpoint,'' {\em IEEE
  Transactions on Circuits and Systems I: Regular Papers}, vol.~57, no.~1,
  pp.~213--224, 2010.

\bibitem{li2011dynamic}
Z.~Li, Z.~Duan, and G.~Chen, ``Dynamic consensus of linear multi-agent
  systems,'' {\em IET Control Theory and Applications}, vol.~5, no.~1,
  pp.~19--28, 2011.

\bibitem{tuna2009conditions}
S.~Tuna, ``Conditions for synchronizability in arrays of coupled linear
  systems,'' {\em IEEE Transactions on Automatic Control}, vol.~54, no.~10,
  pp.~2416--2420, 2009.

\bibitem{seo2009consensus}
J.~Seo, H.~Shim, and J.~Back, ``{Consensus of high-order linear systems using
  dynamic output feedback compensator: Low gain approach},'' {\em Automatica},
  vol.~45, no.~11, pp.~2659--2664, 2009.

\bibitem{zhang2011optimal}
H.~Zhang, F.~Lewis, and A.~Das, ``Optimal design for synchronization of
  cooperative systems: State feedback, observer, and output feedback,'' {\em
  IEEE Transactions on Automatic Control}, vol.~56, no.~8, pp.~1948--1952,
  2011.

\bibitem{ma2010necessary}
C.~Ma and J.~Zhang, ``Necessary and sufficient conditions for consensusability
  of linear multi-sgent systems,'' {\em IEEE Transactions on Automatic
  Control}, vol.~55, no.~5, pp.~1263--1268, 2010.

\bibitem{li2012adaptiveauto}
Z.~Li, W.~Ren, X.~Liu, and L.~Xie, ``{Distributed consensus of linear
  multi-agent systems with adaptive dynamic protocols},'' {\em Automatica},
  vol.~49, no.~7, pp.~1986--1995, 2013.

\bibitem{li2011adaptive}
Z.~Li, W.~Ren, X.~Liu, and M.~Fu, ``{Consensus of multi-agent systems with
  general linear and Lipschitz nonlinear dynamics using distributed adaptive
  protocols},'' {\em IEEE Transactions on Automatic Control}, vol.~58, no.~7,
  pp.~1786--1791, 2013.

\bibitem{su2011adaptive}
H.~Su, G.~Chen, X.~Wang, and Z.~Lin, ``Adaptive second-order consensus of
  networked mobile agents with nonlinear dynamics,'' {\em Automatica}, vol.~47,
  no.~2, pp.~368--375, 2011.

\bibitem{yu2013distributed}
W.~Yu, W.~Ren, W.~X. Zheng, G.~Chen, and J.~L{\"u}, ``Distributed control gains
  design for consensus in multi-agent systems with second-order nonlinear
  dynamics,'' {\em Automatica}, vol.~49, no.~7, pp.~2107--2115, 2013.

\bibitem{li2014TAC}
Z.~Li, G.~Wen, Z.~Duan, and W.~Ren, ``Designing fully distributed consensus
  protocols for linear multi-agent systems with directed communication
  graphs,'' {\em IEEE Transactions on Automatic Control}, contionally accepted,
  2014.

\bibitem{mei2014consensus}
J.~Mei, W.~Ren, and J.~Chen, ``Consensus of second-order heterogeneous
  multi-agent systems under a directed graph,'' in {\em The 2014 American
  Control Conference}, pp.~802--807, IEEE, 2014.

\bibitem{ioannou1996robust}
P.~A. Ioannou and J.~Sun, {\em Robust Adaptive Control}.
\newblock New York, NY: Prentice-Hall, Inc., 1996.

\bibitem{ren2005consensus}
W.~Ren and R.~Beard, ``{Consensus seeking in multiagent systems under
  dynamically changing interaction topologies},'' {\em IEEE Transactions on
  Automatic Control}, vol.~50, no.~5, pp.~655--661, 2005.

\bibitem{bernstein2009matrix}
D.~S. Bernstein, {\em Matrix Mathematics: Theory, Facts, and Formulas}.
\newblock Princeton University Press, 2009.

\bibitem{corless1981continuous}
M.~Corless and G.~Leitmann, ``Continuous state feedback guaranteeing uniform
  ultimate boundedness for uncertain dynamic systems,'' {\em IEEE Transactions
  on Automatic Control}, vol.~26, no.~5, pp.~1139--1144, 1981.

\bibitem{tuna2008lqr}
S.~E. Tuna, ``{LQR}-based coupling gain for synchronization of linear
  systems,'' {\em arXiv preprint arXiv:0801.3390}, 2008.

\bibitem{zhou1998essentials}
K.~Zhou and J.~Doyle, {\em {Essentials of Robust Control}}.
\newblock Upper Saddle River, NJ: Prentice Hall, 1998.

\end{thebibliography}
\end{document}